\newcommand{\calE}{\mathcal{E}}
\newcommand{\calC}{\mathcal{C}}
\newcommand{\calV}{\mathcal{V}}
\newcommand{\Tr}{\operatorname{Tr}}
\newcommand{\dom}{\operatorname{dom}}
\newcommand \bra[1] {\langle {#1} |}
\newcommand \ket[1] {| {#1} \rangle}
\newcommand \braket[1] {\langle {#1} \rangle}
\newcommand{\RR}{\mathbb{R}}
\newcommand{\NN}{\mathbb{N}}
\newcommand{\RRinf}{\mathbb{R}\cup\{+\infty\}}
\newcommand{\RRminf}{\mathbb{R}\cup\{-\infty\}}
\newcommand{\jp}{{\vec{j}_\text{p}}}
\renewcommand{\vec}[1]{\mathbf{#1}}
\DeclareMathOperator*{\argmin}{arg\,min}
\newtheorem{theorem}{Theorem}
\newtheorem{proposition}{Proposition}
\newtheorem{corollary}{Corollary}
\newtheorem{definition}{Definition}
\begin{document}

\title{Ground-state densities from the Rayleigh--Ritz variation principle and from density-functional theory}

\author{Simen Kvaal}
\email{simen.kvaal@kjemi.uio.no}
\affiliation{Centre for Theoretical and Computational Chemistry, Department of Chemistry, University of Oslo, P.O. Box 1033 Blindern, N-0315 Oslo, Norway}

\author{Trygve Helgaker}
\email{t.u.helgaker@kjemi.uio.no}
\affiliation{Centre for Theoretical and Computational Chemistry, Department of Chemistry, University of Oslo, P.O. Box 1033 Blindern, N-0315 Oslo, Norway}

\date{\today}

\pacs{31.15.ec}
\keywords{Rayleigh--Ritz variation principle, Hohenberg--Kohn
  variation principle, ground-state densities}

\begin{abstract}
  The relationship between the densities of ground-state wave functions (i.e., the minimizers of
  the Rayleigh--Ritz variation principle) and the
  ground-state densities in density-functional theory (i.e., the minimizers of the Hohenberg--Kohn variation principle)
is studied within the framework of convex
  conjugation, in a generic setting covering  molecular systems,
  solid-state systems, and more. Having introduced admissible density
functionals as functionals that produce the exact ground ground-state energy for a
  given external potential by minimizing over densities in
  the Hohenberg--Kohn variation principle, necessary sufficient conditions on such 
  functionals are established to ensure that the Rayleigh--Ritz
  ground-state densities and the Hohenberg--Kohn ground-state densities
  are identical.
  We apply the results to molecular systems in the Born--Oppenheimer
  approximation. For any given potential $v
  \in L^{3/2}(\RR^3) + L^{\infty}(\RR^3)$, we establish a one-to-one
  correspondence between the mixed ground-state densities of the
  Rayleigh--Ritz variation principle and the mixed ground-state
  densities of the Hohenberg--Kohn variation principle when the Lieb
  density-matrix constrained-search universal density functional is
  taken as the admissible functional. A
  similar one-to-one correspondence is established between the pure
  ground-state densities of the Rayleigh--Ritz variation principle and
  the pure ground-state densities obtained using the Hohenberg--Kohn
  variation principle with the Levy--Lieb pure-state
  constrained-search functional. In other words, all physical
  ground-state densities (pure or mixed) are recovered with these
  functionals and no false densities (i.e., minimizing
  densities that are not physical) exist. The importance of topology (i.e.,
  choice of Banach space of densities and potentials) is emphasized
  and illustrated. The relevance of these results for
  current-density-functional theory is examined.
\end{abstract}

\maketitle

\section{Introduction}
\label{sec:introduction}

The concept of modeling an electron gas and more
generally all electronic systems using only the density $\rho$ goes
back to the early days of quantum mechanics and the independent work
of Tomas and Fermi in 1927 \cite{Thomas1927,Fermi1927}. However, it
was only with the seminal work of Hohenberg and Kohn in 1964 that density-functional theory (DFT) was
shown to be---in principle, at least---an exact
theory\;\cite{Hohenberg1964}.  In 1983, Lieb formulated DFT rigorously
for electronic systems in $\RR^3$, using concepts from convex
analysis\;\cite{Lieb1983}. Today, DFT is the most popular
computational method for many-electron systems in chemistry and
solid-state physics.

In an external electrostatic scalar potential $v$, the ground-state energy $E(v)$ of an $N$-electron system is 
obtained from the \emph{Rayleigh--Ritz variation principle} 
\begin{equation}
  E(v) := \inf_\psi \braket{\psi|\hat{H}(v)|\psi} \label{eq:RR-char},
\end{equation}
where $\hat{H}(v)$ is the Hamiltonian and the infimum
extends over all $L^2$-normalized wave functions for which the
expectation value makes sense. With each minimizing ground-state wave
function $\psi$ in Eq.~(\ref{eq:RR-char}), there is an 
associated ground-state density $\rho$.
In DFT, the ground-state energy $E(v)$ is instead obtained from 
the \emph{Hohenberg--Kohn variation principle}, 
\begin{equation}
  E(v) = \inf_{\rho} \bigl( F_0(\rho) + ( v \vert \rho) \bigr),
  \label{eq:dft-basic}
\end{equation}
where $F_0(\rho)$ is the \emph{universal functional} and the minimization
is over all densities for which the interaction
$(v \vert \rho) = \int \! v(\mathbf r) \rho(\mathbf r)  \mathrm d \mathbf r$ 
is meaningful. Note that we have not specified the spatial domain
nor the nature of the external potential $v$. Thus, the minimization problems~(\ref{eq:RR-char}--\ref{eq:dft-basic}) are so far
not fully defined from a mathematical perspective. Moreover,
the functional $F_0$ is not unique---any functional $F_0$ that gives
the correct ground-state energy in the Hohenberg--Kohn variation principle is said to be `admissible'.
Common expositions of DFT use the \emph{Levy--Lieb constrained-search}
functional $F_\text{LL}$ \cite{Levy1979,Lieb1983},  
the \emph{density-matrix constrained-search functional} $F_\text{DM}$
\cite{Lieb1983}, and the \emph{Lieb functional} $F$,
all giving the same ground-state energy $E(v)$ for a
large class of potentials. Specializing to electrons in $\RR^3$ and
assuming that the external potential $v$ is sufficiently well-behaved, these functionals are given by the following expressions:
\begin{align}
  F_\text{LL}(\rho) &:= \inf \!\left\{ \braket{\psi|\hat{H}_0 |\psi} \,
    \big|\, \psi \in H^1_\text{S}(\RR^{3N}), \, \Vert \psi \Vert_2 = 1,\, \psi\mapsto\rho  \right\}, \label{eq:FLL-def} \\
  F_\text{DM}(\rho) &:= \inf \!\left\{ \Tr(\hat{H}_0 \Gamma) \, \big| \,
  \Gamma = \textstyle \sum\nolimits_k \lambda_k \ket{\psi_k}\bra{\psi_k}, \, \textstyle \sum\nolimits_k \lambda_k = 1, \,
  \lambda_k \geq 0, \, 
    \psi_k \in H^1_\text{S}(\RR^{3N}),\, \Vert \psi_k \Vert_2 = 1,\,
    \psi\mapsto\rho\, \!\right\}\!, \label{eq:FDM-def} \\
  F(\rho) &:= \!\sup_{v\in X'} ( E(v) - (v | \rho) ). \label{eq:lieb-vp}
\end{align}
Here $H^1_\text S(\RR^{3N})$ is the first-order Sobolev space $H^1_\text S(\RR^{3N})$ with proper permutation
symmetry due to spin,
$\hat{H}_0 = \hat{T} + \hat{W}$ is the sum of the
kinetic and inter-electron interaction operators, and 
$\psi \mapsto \rho$ and $\Gamma \mapsto \rho$ indicate that the wave function $\psi$ and density matrix $\Gamma$, respectively, have density $\rho$. 
In the definition of the Lieb functional, $X$ is a Banach space in
which densities are embedded, with $X'$ as its dual, consisting of
potentials. We note that the
Lieb functional is by construction lower semi-continuous and convex,
being the conjugate function to $E$ in the sense of convex
analysis.

Whereas the density functionals $F_\text{LL}$, $F_\text{DM}$ and $F$ are all admissible and therefore give the same total energy 
in the Hohenberg--Kohn variation principle, they may have different minimizing densities. 
In this article, we study the relationship between the densities in the Hohenberg--Kohn and Rayleigh--Ritz variation principles.
In particular, we establish what conditions must be imposed on an admissible density functional to
ensure that the ground-state densities (i.e., the minimizing densities) in the Hohenberg--Kohn
variation principle are the same as the densities of the ground-state wave functions
(i.e., the minimizing wave functions) in the Rayleigh--Ritz variation principle. 

While our treatment covers the standard DFT setting outlined above, it is 
motivated by a study of current-density-functional theory (CDFT),
where the $N$-electron system is subject to an external magnetic vector
potential $\vec{A}$ in addition to the scalar potential $v$
\cite{Vignale1987,Vignale1988,Laestadius2014}, introducing the
paramagnetic current density $\jp \in \vec{L}^1(\RR^3)$ as an
additional variable in the Hohenberg--Kohn variation principle:
\begin{equation}
  E(v,\vec{A}) = \inf_{\rho,\jp} \left( F_0\left(\rho,\jp\right) + \left( v +
  \tfrac{1}{2}A^2 \mid \rho\right) + \left( \vec{A} \mid \jp \right) \right),\label{eq:cdft-hk}
\end{equation}
where $\left( \vec{A} \mid \jp \right) = \int\! \vec A(\mathbf r)
\cdot \jp(\mathbf r) \,\mathrm d \mathbf r$.  
The CDFT generalization of the Levy--Lieb functional is
the \emph{Vignale--Rasolt functional} $F_\text{VR}(\rho,\jp)$; there
is likewise a generalization of the density-matrix constrained-search
functional $F_\text{DM}$ and the Lieb functional $F$ to include a current dependence. The
mathematical properties of these functionals are not as well
understood as in the standard DFT setting. The results presented in
this article are in part meant to be a step towards such an understanding.

The article is structured as follows. In Sec.\,\ref{sec:abstract}, we
describe DFT from a generic and abstract point of view, introducing
arbitrary admissible density functionals $F_0$ and some concepts of convex
analysis. We obtain several results that characterize the relationship
between minimizers of the Rayleigh--Ritz
variation principle and of the Hohenberg--Kohn variation principle. In
Sec.\,\ref{sec:dft-application} we apply our findings to the standard
DFT of atoms and molecules in the Born--Oppenheimer approximation. The
importance of the topology of the underlying density space is emphasized. In
Sec.\,\ref{sec:cdft-application} we discuss CDFT, and finally, in
Sec.\,\ref{sec:conclusion} we summarize and draw some conclusions.

\section{DFT from an abstract point of view}
\label{sec:abstract}
\subsection{Admissible universal density functionals}

Except for the trivial case of one electron, an explicit formula for
any of the standard density functionals used in the Hohenberg--Kohn
variation principle in Eq.~(\ref{eq:dft-basic}) is not
known. Moreover, the mathematical analysis of Eq.~\eqref{eq:dft-basic}
is difficult without further assumptions. In his work
\cite{Lieb1983}, Lieb placed DFT for electronic systems in $\RR^3$ on
a firm mathematical ground using the language of \emph{convex
  analysis} \cite{VanTiel,EkelandAndTemam}, the natural setting for
problems such as Eq.~\eqref{eq:dft-basic}.  The starting point is to
embed the densities in a Banach space $X$ and to consider potentials
in the dual space $X'$, thereby making the interaction
$(v\,\vert\,\rho)$ well-defined and continuous in both arguments. We then
obtain the ground-state energy as a map $E : X' \to \RRminf$ given by
\begin{equation}
E(v) = \inf_{\rho\in X} \left( F_0(\rho) + (v|\rho) \right), \quad \forall v\in X',
\label{eq:admissible-dft}
\end{equation}
where $F_0 : X \to \RRinf$ is the universal density functional. Being the
pointwise infimum of a collection of affine maps of the form $v \mapsto 
(v \,\vert\, \rho) + F_0(\rho)$, the ground-state energy in Eq.~(\ref{eq:admissible-dft}) is automatically
\emph{concave and upper semi-continuous} with respect to the topology
on $X'$. Note that we allow $E$ and $F_0$ to be infinite. The
\emph{effective domain} of a function is the set where the function is
finite---for example, $\dom(E) = \{ v \in X' \mid E(v)>-\infty\}$.

In Ref.~\cite{Lieb1983}, Lieb (and Simon) proved several important
results in the context where $X = L^1(\RR^3) \cap L^3(\RR^3)$, with
dual $X' = L^{3/2}(\RR^3) + L^\infty(\RR^3)$.  However, this choice of
$X$ is not unique, and the derived results may depend on $X$. We
emphasize that, even if our notation suggests the standard DFT
setting, it covers also CDFT and other settings.

A function $F_0: X \to \RRinf$ is said to be an \emph{admissible} density functional if, for each potential $v \in X^\prime$,
it gives a correct ground-state energy by the Hohenberg--Kohn variation principle 
in~Eq.~(\ref{eq:admissible-dft})---that is, if it produces identical
results with the Rayleigh--Ritz variation
principle in Eq.~\eqref{eq:RR-char}.
Note that we do not require that there exists a minimizing 
density $\rho$ for a given potential $v$, even in those cases where
$v$ supports a ground state in the Rayleigh--Ritz variation principle
in Eq.\;(\ref{eq:RR-char}). It is sufficient that a minimizing
sequence $\{\rho_n\}$ can be found.  Clearly, for any
pair $(\rho,v) \in X \times X^\prime$, an admissible density
functional and the ground-state energy satisfy the \emph{Fenchel inequality}
\begin{equation}
E(v) \leq  F_0(\rho) + (v \vert \rho) .
\label{eq:Fenchel}
\end{equation}
We are here interested in characterizing those pairs $(\rho,v)$ that saturate Fenchel's inequality,
$E(v) =  F_0(\rho) + (v \vert \rho)$; in other words, those $\rho\in X$ that are minimizers in
Eq.~\eqref{eq:admissible-dft} for a given $v\in X'$.

Since the universal functional is in general not differentiable (see
Refs.~\cite{Lammert2005,Lieb1983} for the standard DFT case), we
cannot write down an Euler equation for the solution of the
minimization problem Eq.~\eqref{eq:admissible-dft}. Moreover, even if
$F_0$ were differentiable, the Euler equation would in general be a necessary but
not sufficient condition for a global minimum in
Eq.~\eqref{eq:admissible-dft}. Instead, we use the concept of
\emph{subdifferentiation} to characterize the solution.  The
admissible density functional $F_0$ is said to be \emph{subdifferentiable} at
$\rho\in X$ if $F_0(\rho) \in \mathbb R$ and if there exists  $u \in X'$, known
as a \emph{subgradient} of $F_0$ at $\rho$, such that
\begin{equation}
  F_0(\rho') \geq F_0(\rho) + ( u  \mid \rho'-\rho), \quad \forall \rho'\in
  X, \label{eq:subdiff-def}
\end{equation}
meaning that $F_0$ touches the affine map 
$\rho' \mapsto F_0(\rho') + ( u \,\vert\, \rho' - \rho)$ at $\rho$ and lies nowhere below it.
The subgradient is thus a generalization of the concept of the
slope of a tangent to the graph of $F_0$ at $\rho$. 
The \emph{subdifferential} $\partial^- F_0(\rho)$ is the (possibly empty) set of all subgradients of $F_0$ at $\rho$:
\begin{equation}
\partial^- F_0(\rho) = \left\{ \,u \in X^\prime \, \left\vert  \, F_0(\rho') \geq F_0(\rho) + ( u  \vert \rho'-\rho), 
\; \forall \rho'\in X,  \; F_0(\rho) \in \mathbb R\, \right.\right\} .
\label{eq:subdiff}
\end{equation}
This set is convex \cite{VanTiel}.
Note that the subdifferential is the empty set whenever $F_0(\rho) = + \infty$. Assuming that $-v $ is a subgradient of $F_0$ at $\rho$, we obtain
by simple rearrangements
\begin{align}
-v \in \partial^- F_0(\rho) & \iff F_0(\rho^\prime) \geq F_0(\rho) + ( -v \,\vert\, \rho^\prime - \rho ) \nonumber \\
                    & \iff F_0(\rho) + ( v \,\vert\, \rho )  \le F_0(\rho^\prime)  + ( v \,\vert\, \rho^\prime) \nonumber \\
                    & \iff F_0(\rho) + ( v \,\vert \rho ) = \inf_{\rho^\prime \in X} \left(F_0(\rho^\prime) + (v \,\vert \rho^\prime) \right)  = E(v),
\label{eq:toFencheq}
\end{align}
We have thus shown the following sufficient and necessary condition
for a minimizing density in the Hohenberg--Kohn variation principle in Eq.\;\eqref{eq:admissible-dft}:
\begin{proposition}\label{thm:equivalence}
  Let $F_0 : X \to \RRinf$ be an admissible density functional so that
  Eq.~\eqref{eq:admissible-dft} holds. Let $v\in X'$ and $\rho \in X$
  be given. Then, 
  \begin{align}
    E(v) = F_0(\rho) + ( v \vert \rho ) \iff -v \in \partial^- F_0(\rho)  .
    \label{eq:toSubEq}
  \end{align}
\end{proposition}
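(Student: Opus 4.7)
The plan is to prove the equivalence by direct algebraic manipulation of the defining inequality for the subdifferential, together with the admissibility of $F_0$; no further convex-analytic machinery is required. Indeed, the chain of rearrangements displayed in Eq.~\eqref{eq:toFencheq} essentially gives the proof, so the task is mainly to organize it carefully as two implications and to keep track of the finiteness conditions.

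For the direction $(\Leftarrow)$, I would assume $-v \in \partial^- F_0(\rho)$, which by the definition in Eq.~\eqref{eq:subdiff} both guarantees $F_0(\rho) \in \mathbb{R}$ and yields $F_0(\rho') \ge F_0(\rho) + (-v \mid \rho' - \rho)$ for all $\rho' \in X$. Adding $(v \mid \rho')$ to both sides and using bilinearity of the pairing gives $F_0(\rho) + (v \mid \rho) \le F_0(\rho') + (v \mid \rho')$ for every $\rho'$. Taking the infimum over $\rho'$ and invoking admissibility, $E(v) = \inf_{\rho'}(F_0(\rho') + (v \mid \rho'))$, then gives $F_0(\rho) + (v \mid \rho) \le E(v)$; combined with Fenchel's inequality~\eqref{eq:Fenchel}, this forces equality.

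For the direction $(\Rightarrow)$, I would start from $E(v) = F_0(\rho) + (v \mid \rho)$. Since $v \in X'$ and $\rho \in X$, the pairing $(v \mid \rho)$ is finite; the equation then forces $F_0(\rho) \in \mathbb{R}$ (in particular $E(v) > -\infty$, since otherwise the equation cannot hold because $F_0 > -\infty$ always). Admissibility now gives $F_0(\rho) + (v \mid \rho) = E(v) \le F_0(\rho') + (v \mid \rho')$ for every $\rho' \in X$, and this rearranges to $F_0(\rho') \ge F_0(\rho) + (-v \mid \rho' - \rho)$, i.e.\ $-v \in \partial^- F_0(\rho)$.

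There is essentially no obstacle here: the entire content of the proposition is that the subgradient inequality at $\rho$ with slope $-v$, after adding $(v \mid \rho')$ to both sides, is literally the statement that $\rho$ minimizes $\rho' \mapsto F_0(\rho') + (v \mid \rho')$. The only bookkeeping point is ensuring $F_0(\rho) \in \mathbb{R}$, which is built into the definition of subdifferentiability on one side of the equivalence and follows automatically from finiteness of $(v \mid \rho)$ on the other.
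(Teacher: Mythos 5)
Your proposal is correct and follows essentially the same route as the paper, which establishes the proposition via the chain of rearrangements in Eq.~\eqref{eq:toFencheq} (adding $(v\mid\rho')$ to the subgradient inequality and identifying the resulting infimum with $E(v)$ by admissibility). Your additional bookkeeping on the finiteness of $F_0(\rho)$ is a sensible, if minor, tightening of what the paper leaves implicit.
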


\subsection{Lieb's universal density functional}
\label{sec:lieb}

From the Fenchel inequality in Eq.~(\ref{eq:Fenchel}), we obtain by a
trivial rearrangement the equivalent inequality
\begin{equation}
F_0(\rho) \geq  E(v) - (v \,\vert\, \rho),
\label{eq:FenchelL}
\end{equation}
stating that each admissible density functional $F_0(\rho)$ is an upper
bound to $\rho \mapsto E(v) - (v \,\vert\, \rho)$ with respect to all variations in
$v \in X^\prime$.  We now define the \emph{Lieb universal density
  functional} $F(\rho)$ as the \emph{least upper bound} to $E(v) - (v
\,\vert\, \rho)$ for all $v \in X^\prime$:
\begin{equation}
F(\rho) = \sup_{v \in X^\prime}  \left( E(v) - (v \,\vert\, \rho) \right), \quad \forall \rho \in X,
\label{eq:FL}
\end{equation}
which in 
the following will be referred as the \emph{Lieb variation principle}.
The Lieb functional is clearly a lower bound to all admissible density functionals:
\begin{equation}
F(\rho) \leq F_0(\rho).
\label{eq:Flower}
\end{equation}
Since the Lieb functional by construction also satisfies the Fenchel inequality in Eq.~(\ref{eq:Fenchel}), we obtain:
\begin{proposition}
The Lieb functional is an admissible density functional:
\begin{equation}
E(v) = \inf_{\rho \in X} \left( F(\rho) + (v \,\vert\, \rho) \right), \quad \forall v \in X'.
\label{eq:EL}
\end{equation}
\end{proposition}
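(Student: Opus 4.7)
The plan is to establish $E(v) = \inf_{\rho \in X}(F(\rho) + (v|\rho))$ by proving two opposite inequalities, exploiting the characterization of $F$ as a supremum on one side and the already-established bound $F \leq F_0$ on the other. No subdifferential machinery or optimization over minimizers is needed; this is purely a Fenchel-biconjugate style argument.

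For the direction $\inf_{\rho}(F(\rho) + (v|\rho)) \geq E(v)$, I would use only the definition \eqref{eq:FL}. Since $F(\rho) = \sup_{v' \in X'}(E(v') - (v'|\rho))$, the particular choice $v' = v$ gives $F(\rho) \geq E(v) - (v|\rho)$, i.e.\ $F(\rho) + (v|\rho) \geq E(v)$ for every $\rho \in X$. Taking the infimum over $\rho$ then yields the desired inequality.

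For the direction $\inf_{\rho}(F(\rho) + (v|\rho)) \leq E(v)$, I would invoke \eqref{eq:Flower}, namely $F \leq F_0$ for \emph{any} admissible density functional $F_0$ (such a functional is presumed to exist in the abstract setup, and concrete instances such as $F_\text{LL}$ or $F_\text{DM}$ are recalled in the introduction). Admissibility of $F_0$ means that for every $\varepsilon > 0$ there exists $\rho_\varepsilon \in X$ with $F_0(\rho_\varepsilon) + (v|\rho_\varepsilon) \leq E(v) + \varepsilon$. Combining with $F(\rho_\varepsilon) \leq F_0(\rho_\varepsilon)$ gives $F(\rho_\varepsilon) + (v|\rho_\varepsilon) \leq E(v) + \varepsilon$, so that $\inf_{\rho}(F(\rho) + (v|\rho)) \leq E(v) + \varepsilon$; letting $\varepsilon \to 0$ concludes the argument.

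There is essentially no obstacle here; the only subtle point is that the $\leq$ direction requires the existence of at least one admissible $F_0$. In the abstract framework this is simply an assumption implicit in the previous subsection, and its concrete verification in specific physical settings (e.g., showing that $F_\text{LL}$ or $F_\text{DM}$ is admissible) is a separate matter handled later in the paper. Combining the two inequalities immediately gives \eqref{eq:EL} and also confirms the natural identification of $F$ as the (Legendre--Fenchel) biconjugate of $-E$, which incidentally justifies the convexity and lower semi-continuity of $F$ noted earlier.
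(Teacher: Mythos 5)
Your proposal is correct and is essentially the paper's own argument: the lower bound $F(\rho)+(v|\rho)\geq E(v)$ from the definition of $F$ as a supremum, combined with $F\leq F_0$ and the admissibility of $F_0$ to get the reverse inequality (the paper compresses this into the single chain $E(v)\leq\inf_\rho(F(\rho)+(v|\rho))\leq\inf_\rho(F_0(\rho)+(v|\rho))=E(v)$). Your $\varepsilon$-formulation of the second step is just a more explicit rendering of the same infimum comparison.
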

\begin{proof}
  $E(v) \leq \inf_{\rho\in X} \left( F(\rho) + (v\,\vert\,\rho)\right)
  \leq \inf_{\rho\in X} \left( F_0(\rho) + (v\,\vert\,\rho)\right) = E(v)$.
\end{proof}

The Lieb functional is related to the ground-state energy in a special,
symmetrical manner---compare the Lieb variation principle in Eq.\,(\ref{eq:FL}) with 
the Hohenberg--Kohn variation principle Eq.\,(\ref{eq:EL}).  In
the language of convex analysis, $E$ and $F$ are said to be
\emph{skew conjugate functions} \cite{Rockafellar1968}: the function $E$ is concave and upper
semi-continuous, whereas its skew conjugate function $F$ is
convex and lower semi-continuous.  Convexity of $F$ and
concavity of $E$ mean that, for each pair $\rho_1,\rho_2 \in X$, each pair $v_1,v_2 \in X^\prime$, and each $\lambda \in (0,1)$,
we have
\begin{align}
F(\lambda \rho_1 + (1-\lambda) \rho_2) &\leq \lambda F(\rho_1) + (1-\lambda) F(\rho_2), \\
E(\lambda v_1 + (1-\lambda) v_2) &\geq \lambda E(v_1) + (1-\lambda) E(v_2), 
\end{align}
whereas lower semi-continuity of $F$ and upper semi-continuity of $E$ imply that
\begin{align}
\liminf_{\rho \to \rho_0} F(\rho) &\geq F(\rho_0), \\
\limsup_{v \to v_0} E(v) &\leq E(v_0).
\end{align}
These properties follow straightforwardly from Eqs.\;(\ref{eq:FL})
and~(\ref{eq:EL}).  It is a fundamental result of convex analysis that \emph{there
is a one-to-one correspondence between all lower semi-continuous
convex functions on $X$ and all upper semi-continuous concave
functions on $X^\prime$}
\cite{VanTiel,EkelandAndTemam,Rockafellar1968}, the correspondence
being as in Eqs.~\eqref{eq:FL} and \eqref{eq:EL}.  It follows that the Lieb functional $F$ is not only a lower bound to all admissible
density functionals but also
the \emph{{only} admissible density functional that is lower
semi-continuous and convex with respect to the topology on $X$}. 
It is a trivial but important observation that each property and each feature of $E$ are, in some manner, exactly reflected in the properties
and features of $F$ and vice versa. 
Note, however, that the Lieb functional depends explicitly on $X$, just like $E$ depends on $X^\prime$.

Just like there may happen to be no minimizing density in the
Hohenberg--Kohn variation principle, there may happen to be no maximizing
potential in the Lieb variation principle. To characterize maximizing
potentials, we introduce superdifferentiability by analogy with
subdifferentiability.
The ground-state energy $E$ is said to be
\emph{superdifferentiable} at $v\in X^\prime$ if there exists an element $\rho \in X$, 
known as a \emph{supergradient} of $E$ at $v$, such that
\begin{equation}
  E(v^\prime) \leq E(v) + ( v^\prime - v  \mid \rho), \quad \forall v' \in X^\prime. \label{eq:subdiff-defE}
\end{equation}
The \emph{superdifferential} $\partial^+ E(v)$ is the (possibly empty) convex set of all supergradients of $E$ at $v$:
\begin{equation}
\partial^+ E(v) = \left\{ \,\rho \in X \, \left\vert  \, E(v') \leq E(v) + ( \rho  \vert v'-v), 
\; \forall v'\in X^\prime,  \; E(v) \in \mathbb R\, \right.\right\}.
\label{eq:subdiff}
\end{equation}
In exactly the same manner that we proved Eq.~(\ref{eq:toSubEq}), we obtain the following necessary and sufficient conditions 
for the existence of a maximizing potential in the Lieb variation principle:
\begin{align}
E(v) = F(\rho) + ( v \vert \rho ) \iff \rho \in \partial^+ E(v) .
\label{eq:toSupEqF}
\end{align}
Note carefully that this result holds only for the Lieb functional, not for
an arbitrary admissible density functional.
Combining Eqs.~(\ref{eq:toSubEq}) and~(\ref{eq:toSupEqF}), we arrive at 
the following characterization of the optimality condition in
Eqs.~\eqref{eq:FL} and \eqref{eq:EL}:
\begin{proposition}\label{thm:subdiffs-conj}
If $F : X\to\RRinf$ is the Lieb functional and $E : X' \to \RRminf$ is the ground-state energy, then
  \begin{align}
    E(v) = F(\rho) + ( v \,\vert\, \rho ) \iff -v \in \partial^- F(\rho)
    \iff \rho \in \partial^+ E(v) . 
    \label{eq:toEF}
  \end{align}
\end{proposition}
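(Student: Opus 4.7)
The plan is to recognize that Proposition~\ref{thm:subdiffs-conj} packages two biconditionals, one of which is simply Proposition~\ref{thm:equivalence} specialized to the Lieb functional, and the other of which is a dual statement proved by mirroring Eq.~(\ref{eq:toFencheq}) with the roles of $E$ and $F$ interchanged. There is essentially no new analytic content beyond what has already been set up.

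For the first biconditional $E(v) = F(\rho) + ( v \,\vert\, \rho ) \iff -v \in \partial^- F(\rho)$, I would simply invoke Proposition~\ref{thm:equivalence} with $F_0 = F$. The hypothesis of Proposition~\ref{thm:equivalence} is that $F_0$ is admissible, and this is exactly the content of the immediately preceding proposition, so this half is immediate.

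For the second biconditional $E(v) = F(\rho) + ( v \,\vert\, \rho ) \iff \rho \in \partial^+ E(v)$, I would run the chain of implications in Eq.~(\ref{eq:toFencheq}) in its dual form. Starting from the supergradient inequality in Eq.~(\ref{eq:subdiff-defE}), $E(v') \leq E(v) + ( v' - v \,\vert\, \rho )$ for all $v' \in X'$, a rearrangement yields $E(v') - ( v' \,\vert\, \rho ) \leq E(v) - ( v \,\vert\, \rho )$ for all $v' \in X'$. Taking the supremum over $v'$ and invoking the Lieb variation principle in Eq.~(\ref{eq:FL}) gives $F(\rho) \leq E(v) - ( v \,\vert\, \rho )$. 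The reverse inequality is the Fenchel inequality in Eq.~(\ref{eq:FenchelL}) applied to $F$, so equality holds. Each step is reversible, so the converse follows as well, confirming Eq.~(\ref{eq:toSupEqF}) and hence the second biconditional.

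I do not foresee any real obstacle: the structural work was performed when $F$ was constructed as the skew conjugate of $E$, and the two biconditionals are formally symmetric under the exchange $(X, F, \partial^-) \leftrightarrow (X', E, \partial^+)$. The only conceptual point worth flagging in the write-up is that the second biconditional relies crucially on the supremum representation of $F$ in Eq.~(\ref{eq:FL}) and therefore is specific to the Lieb functional; for a generic admissible $F_0$ only the first equivalence survives, which is precisely the asymmetry between Propositions~\ref{thm:equivalence} and~\ref{thm:subdiffs-conj}.
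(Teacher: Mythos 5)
Your proposal is correct and follows essentially the same route as the paper: the first equivalence is Proposition~\ref{thm:equivalence} applied to the admissible functional $F$, and the second is obtained by the dual of the rearrangement in Eq.~(\ref{eq:toFencheq}), replacing the infimum defining $E$ with the supremum defining $F$ in Eq.~(\ref{eq:FL}). Your explicit observation that the second equivalence hinges on the supremum representation and hence fails for a generic admissible $F_0$ matches the paper's remark following the proposition.
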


For a general admissible density functional $F_0$, these conditions
are not equivalent: there may exist $\rho \in \partial^+ E(v)$ such
that $-v \notin \partial^- F_0(\rho)$ when $F_0 \neq F$.
On the other hand, the converse statement, $-v
\in \partial^- F_0(\rho) \implies \rho \in \partial^+ E(v)$, holds for
any admissible density functional $F_0$. To prove this, assume that $-v
\in \partial^- F_0(\rho)$. According to Eq.~(\ref{eq:toSubEq}), we then
have $F_0(\rho) = E(v) - ( v \,\vert\, \rho )$.  At the same time, the
Fenchel inequality holds for any admissible density functional
$F_0$:
\begin{equation}
E(u) \leq F_0(\rho) + ( u \,\vert\, \rho ), \quad \forall u \in X^\prime .
\label{eq:fyf}
\end{equation}
Substituting $F_0(\rho) = E(v) - ( v \,\vert\, \rho )$ in this inequality, we obtain
\begin{equation}
E(u) \leq E(v) + ( u -v \,\vert\, \rho ), \quad \forall u \in X^\prime ,
\end{equation}
implying that $\rho \in \partial^+ E(v)$. 
We have thus proved the following:
\begin{proposition}\label{thm:subdiffs0}
If $F_0 : X\to\RRinf$ is an admissible functional and $E : X' \to \RRminf$ the ground-state energy, then
  \begin{align}
    E(v) = F_0(\rho) + ( v \,\vert\, \rho ) \iff -v \in \partial^- F_0(\rho) \implies \rho \in \partial^+ E(v) .
    \label{eq:toEFad}
  \end{align}
\end{proposition}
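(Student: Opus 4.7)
The plan is to split the statement into two separate claims: the biconditional $E(v) = F_0(\rho) + (v\vert\rho) \iff -v \in \partial^- F_0(\rho)$, and the forward implication $-v \in \partial^- F_0(\rho) \implies \rho \in \partial^+ E(v)$.

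For the biconditional, I would replay the short chain of rearrangements already displayed in Eq.~\eqref{eq:toFencheq}. Unwrapping the subgradient definition \eqref{eq:subdiff-def} with $u=-v$, the inclusion $-v \in \partial^- F_0(\rho)$ reads $F_0(\rho') \geq F_0(\rho) - (v\vert\rho' - \rho)$ for every $\rho' \in X$, which is the same as $F_0(\rho) + (v\vert\rho) \leq F_0(\rho') + (v\vert\rho')$ for every $\rho'$. Taking the infimum over $\rho'$ on the right and invoking admissibility of $F_0$ (which turns this infimum into $E(v)$), the condition becomes $F_0(\rho) + (v\vert\rho) \leq E(v)$. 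The reverse inequality is exactly Fenchel's inequality \eqref{eq:Fenchel}, so both sides coincide. This is the content of Proposition~\ref{thm:equivalence}, invoked verbatim.

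For the implication $-v \in \partial^- F_0(\rho) \Rightarrow \rho \in \partial^+ E(v)$, assume the hypothesis. The biconditional just established gives $F_0(\rho) = E(v) - (v\vert\rho)$, and $F_0(\rho)$ is automatically finite because the subdifferential is empty wherever $F_0 = +\infty$. Admissibility of $F_0$ now supplies the Fenchel inequality $E(u) \leq F_0(\rho) + (u\vert\rho)$ for every $u \in X'$. Substituting the expression just obtained for $F_0(\rho)$ and rearranging yields
\begin{equation*}
E(u) \leq E(v) + (u - v \vert \rho), \quad \forall u \in X',
\end{equation*}
which is precisely the defining inequality \eqref{eq:subdiff-defE} for $\rho$ to be a supergradient of $E$ at $v$.

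No step is genuinely difficult; the only subtlety I would flag carefully for the reader is that the converse implication $\rho \in \partial^+ E(v) \Rightarrow -v \in \partial^- F_0(\rho)$ is \emph{not} asserted and in general fails when $F_0 \neq F$. This asymmetry is exactly what distinguishes the present proposition from Proposition~\ref{thm:subdiffs-conj}: biconjugacy $F=E^*$, $E=F^*$ is available only for the Lieb functional, whereas a generic admissible $F_0 \geq F$ may strictly exceed $F$ at some $\rho$ and so need not be minorized by the affine map determined by a supergradient of $E$ at the corresponding $v$.
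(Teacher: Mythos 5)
Your proposal is correct and follows essentially the same route as the paper: the biconditional is exactly Proposition~\ref{thm:equivalence} (the rearrangement chain in Eq.~\eqref{eq:toFencheq}), and the implication is obtained by substituting $F_0(\rho) = E(v) - (v\,\vert\,\rho)$ into the Fenchel inequality $E(u) \leq F_0(\rho) + (u\,\vert\,\rho)$ and rearranging, just as in the text preceding the proposition. Your closing remark on why the converse fails for $F_0 \neq F$ also matches the paper's own discussion.
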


In general, we have $F \leq F_0$. However, according to the following proposition,
an admissible density functional $F_0(\rho)$ can differ from $F(\rho)$ only 
when $\partial^- F_0(\rho) = \emptyset$: 
\begin{proposition}\label{thm:subdiffs01}
If $F_0 : X\to\RRinf$ is an admissible functional and $F: X\to\RRinf$ the Lieb functional, then 
\begin{equation}
-v \in \partial^- F_0(\rho) \iff F_0(\rho) = F(\rho) \;\wedge\;  -v \in \partial^- F(\rho)  .
\end{equation}
\end{proposition}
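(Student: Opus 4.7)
The plan is to reduce the biconditional directly to the Fenchel-equality characterizations already established in Propositions~\ref{thm:equivalence} and~\ref{thm:subdiffs-conj}, together with the trivial bound $F(\rho) \le F_0(\rho)$ from Eq.~\eqref{eq:Flower}. Neither direction should require any new inequality; the proof is essentially a rewriting.

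For the forward direction, I would start from $-v \in \partial^- F_0(\rho)$ and invoke Proposition~\ref{thm:equivalence} to obtain the Fenchel equality $E(v) = F_0(\rho) + (v\,\vert\,\rho)$. The key observation is then that, because $F$ is itself admissible (Proposition~2 of the paper), the Fenchel inequality applied to $F$ gives $E(v) \le F(\rho) + (v\,\vert\,\rho)$, i.e.\ $F(\rho) \ge E(v) - (v\,\vert\,\rho) = F_0(\rho)$. Combined with the universal lower bound $F(\rho) \le F_0(\rho)$, this squeezes $F(\rho) = F_0(\rho)$. With this equality in hand, the Fenchel equality also holds for $F$, and Proposition~\ref{thm:subdiffs-conj} then converts it into $-v \in \partial^- F(\rho)$.

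For the reverse direction, I would assume both $F_0(\rho) = F(\rho)$ and $-v \in \partial^- F(\rho)$. By Proposition~\ref{thm:subdiffs-conj} the second hypothesis is equivalent to $E(v) = F(\rho) + (v\,\vert\,\rho)$, and substituting $F(\rho) = F_0(\rho)$ on the right gives $E(v) = F_0(\rho) + (v\,\vert\,\rho)$. Then a single application of Proposition~\ref{thm:equivalence}, now for the admissible functional $F_0$, yields $-v \in \partial^- F_0(\rho)$, closing the loop.

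There is no serious obstacle here; the only place where some care is needed is in the forward direction, where one must resist writing down a subdifferential inequality for $F$ directly from the one for $F_0$ (which would require $F_0 - F$ to have trivial subdifferential behaviour at $\rho$). The right move is instead to go through the Fenchel equality as the common pivot, using admissibility of $F$ and the bound $F \le F_0$ to force the equality $F(\rho) = F_0(\rho)$ before translating back to a subgradient statement.
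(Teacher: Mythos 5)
Your proposal is correct and follows essentially the same route as the paper: both directions pivot through the Fenchel equality via Propositions~\ref{thm:equivalence} and~\ref{thm:subdiffs-conj}, with the squeeze $E(v) - (v\,\vert\,\rho) \le F(\rho) \le F_0(\rho)$ forcing $F(\rho)=F_0(\rho)$ in the forward direction. Your write-up merely makes explicit the squeeze step that the paper compresses into ``substituting this result into Eq.~(\ref{eq:toEF})''.
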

\begin{proof}
Let us assume that $-v \in \partial^- F_0(\rho)$. From Eq.~(\ref{eq:toEFad}),
we then have $E(v) = F_0(\rho) + ( v \,\vert\, \rho )$. Substituting this result into Eq.~(\ref{eq:toEF}), we then find 
that $F(\rho) = F_0(\rho)$ and $-v \in \partial^- F(\rho)$. Conversely, if
$F(\rho) = F_0(\rho)$ and $-v \in \partial^- F(\rho)$ hold, 
then we have $E(v) = F_0(\rho) + ( v \,\vert\, \rho )$ by Eq.~(\ref{eq:toEF}), from which
$-v \in \partial^- F_0(\rho)$ follows by Eq.~(\ref{eq:toEFad}).
\end{proof}

\subsection{Constrained-search density functionals}
\label{sec:constrained-search}

So far, 
we have not established a connection between the minimizing densities in the Hohenberg--Kohn
variation principle in Eq.~(\ref{eq:dft-basic}) and the minimizing wave functions in the Rayleigh--Ritz variation principle in~Eq.\;(\ref{eq:RR-char}).
For instance, we have not shown 
that each $\rho \in \partial^+ E(v)$ is the ground-state density associated with some 
wave function $\psi$.  Introducing a generic (abstract) constrained-search density functional, we
consider in this section the relation between minimizers in the Hohenberg--Kohn and
Rayleigh--Ritz variation principles.

Let $S$ be a set with elements $s$ called \emph{states} and let
$d : S \to X$ be a map from $S$ to the Banach space of densities $X$. The
image 
\begin{equation}
\rho_s = d(s) \label{ds}
\end{equation}
is called the \emph{density} of $s$. 
However, no physical meaning is attached to $s\in S$ or to $\rho \in
X$ at this point; they are mathematical objects---for example, $\rho$
may be a (current) density or a reduced density matrix, while $s$ may
be a pure state or a density operator.  As before, $X'$ is the dual
space of $X$, meaning that $v\in X'$ if and only if $v : X \to \RR$ is
a linear and continuous operator. Thus, the pairing $(v\,\vert\,\rho)$
is separately linear and continuous in $v$ and $\rho$. We call $v$ a
\emph{potential} but again no physical meaning is attached at present.

Assume next that, for every $v\in X'$, we are given a map $\calE_v : S \to \RR$ of the form
\begin{equation}
  \calE_v(s) = \calE_0(s) + (v \mid \rho_s), \label{ecal}
\end{equation}
where $\calE_0 : S \to \RR$ is bounded below, meaning that there exists an $M\in \RR$ such that
\begin{equation}
\calE_0(s) \geq M,  \quad \forall s\in S.
\end{equation}
We call $\calE_v(s)$  the \emph{expectation value} of the
total energy when the system is in the state $s$ and influenced by the
potential $v$. 
This setting covers all standard formulations of DFT and CDFT, including both 
pure-state and density-matrix formulations---in particular, we do not
assume linearity of $\calE_0(s)$ and $d(s)$ in $s$.

Next, define the \emph{ground-state energy} $E : X' \to \RRminf$ by the Rayleigh--Ritz variation principle:
\begin{equation}
  E(v) := \inf_{s\in S} \calE_v(s) = \inf_{s\in S} \left( \calE_0(s) + (v\mid\rho_s) \right). \label{eq:E-def}
\end{equation}
Observing that $v$ couples to $\rho_s$ only during minimization, it
makes sense to define the \emph{constrained-search functional} $F_\textnormal{CS} : X \to
\RRinf$ as the map
\begin{equation}
  F_\textnormal{CS}(\rho) = \inf \left\{ \calE_0(s) \; | \; s \mapsto \rho \right\}, \label{eq:F0-def}
\end{equation}
which takes the value $+\infty$ whenever there is no $s\in S$ with
$\rho_s = \rho$. 
 Combining Eqs.\;(\ref{eq:E-def})
and~(\ref{eq:F0-def}), we obtain the Hohenberg--Kohn variation
principle
\begin{equation}
  E(v) = \inf_{\rho \in X} \left( F_\textnormal{CS}(\rho) + (v \mid \rho) \right). \label{eq:cs}
\end{equation}
This result proves:
\begin{proposition}
  The constrained-search functional $F_\textnormal{CS}$ in Eq.\,\eqref{eq:F0-def} is an
  admissible density functional for the energy $E$ in~Eq.\,\eqref{eq:E-def}.
\end{proposition}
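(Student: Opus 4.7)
The plan is to verify the identity
$E(v) = \inf_{\rho \in X}\bigl(F_\textnormal{CS}(\rho) + (v\mid\rho)\bigr)$ of Eq.~\eqref{eq:cs}
by a rearrangement of a double infimum, the same elementary manoeuvre that underlies the original Levy--Lieb constrained-search argument.

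First I would partition the state space $S$ into the fibres of the density map $d$. Every $s\in S$ lies in exactly one set $d^{-1}(\{\rho\})$ with $\rho = \rho_s \in d(S)$, so the Rayleigh--Ritz infimum in Eq.~\eqref{eq:E-def} factors as an iterated infimum,
\begin{equation*}
E(v) = \inf_{s\in S}\bigl(\calE_0(s) + (v\mid\rho_s)\bigr) = \inf_{\rho \in d(S)}\;\inf_{s\mapsto\rho}\bigl(\calE_0(s) + (v\mid\rho)\bigr),
\end{equation*}
where I have used $\rho_s = \rho$ in the inner infimum. Because $(v\mid\rho)$ does not depend on $s$, it may be pulled out of the inner infimum, leaving
\begin{equation*}
E(v) = \inf_{\rho\in d(S)}\Bigl(\inf_{s\mapsto\rho}\calE_0(s) + (v\mid\rho)\Bigr) = \inf_{\rho \in d(S)}\bigl(F_\textnormal{CS}(\rho) + (v\mid\rho)\bigr)
\end{equation*}
directly from the definition in Eq.~\eqref{eq:F0-def}.

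Second, I would extend the outer infimum from $d(S)$ to all of $X$. By the $+\infty$ convention in Eq.~\eqref{eq:F0-def}, $F_\textnormal{CS}(\rho) = +\infty$ whenever $\rho\notin d(S)$, and since $v\in X'$ gives a finite $(v\mid\rho)$ for every $\rho \in X$, these extra densities contribute $+\infty$ to the objective and cannot affect the infimum. The two infima therefore agree, yielding Eq.~\eqref{eq:cs}. The lower bound $\calE_0\geq M$ guarantees that $F_\textnormal{CS}$ never takes the value $-\infty$, so every manipulation above is unambiguous arithmetic in $\RRinf$ and no convergence or topological argument is required. There is in effect no real obstacle: the result is a pure unpacking of definitions, and the only thing requiring attention is the bookkeeping of the value $+\infty$ outside the image of $d$.
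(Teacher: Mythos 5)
Your proof is correct and follows essentially the same route as the paper: the paper's (very terse) argument is precisely the rewriting of the Rayleigh--Ritz infimum as an iterated infimum over densities and then over states in each fibre $d^{-1}(\{\rho\})$, with the $+\infty$ convention handling densities outside $d(S)$. Your version merely makes the bookkeeping explicit, which is a virtue rather than a deviation.
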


\subsection{Characterization of ground states}
\label{sec:ground-states}

Suppose
that $v \in X^\prime$ is such that there exists $s\in S$ for which the infimum in Eq.~\eqref{eq:E-def} is a minimum:
\begin{equation}
  E(v) = \inf_{s'} \calE_v(s') = \calE_v(s),
\label{ecal2}
\end{equation}
meaning that there exists a \emph{ground state} $s \in \argmin_{s^\prime} \calE_v(s^\prime)$ for $v$. Let now $v'\in X'$ be arbitrary, and compute
\begin{equation}
  E(v') = \inf_{s'\in S} \calE_{v'}(s') \leq \calE_{v'}(s) 
= \calE_0(s) + (v'|\rho_s) 
= \calE_v(s) + (v' -v|\rho_s) 
= E(v) + (v'-v|\rho_s),
\end{equation}
where we have used Eqs.\;(\ref{ecal}) and~(\ref{ecal2}). We have thus proved the following proposition:
\begin{proposition}
If $s \in S$ is a ground state for $v$ in Eq.~\eqref{ecal}, then $\rho_s$ is a supergradient of  $E$ in Eq.~\eqref{eq:E-def} at $v$:
\begin{equation}
s \in \argmin_{s^\prime} \calE_v(s^\prime) \implies \rho_s \in \partial^+ E(v).
\end{equation}
\end{proposition}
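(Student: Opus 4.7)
The plan is to verify the defining inequality~\eqref{eq:subdiff-defE} for the superdifferential directly, by comparing $E(v')$ at an arbitrary $v'\in X'$ with the value of $\calE_{v'}$ at the given ground state $s$. The idea is that $s$ is admissible in the Rayleigh--Ritz infimum defining $E(v')$, so $E(v')\le \calE_{v'}(s)$, and the affine structure of $\calE_{v'}$ in $v'$ converts this bound into the tangent inequality for $E$.

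First I would fix an arbitrary $v'\in X'$ and write
\begin{equation}
E(v') \;=\; \inf_{s'\in S}\calE_{v'}(s') \;\le\; \calE_{v'}(s),
\end{equation}
which is immediate from the definition~\eqref{eq:E-def}. Second, I would expand the right-hand side using the decomposition~\eqref{ecal},
\begin{equation}
\calE_{v'}(s) \;=\; \calE_0(s) + (v'\mid\rho_s) \;=\; \bigl(\calE_0(s)+(v\mid\rho_s)\bigr) + (v'-v\mid\rho_s) \;=\; \calE_v(s) + (v'-v\mid\rho_s),
\end{equation}
where I have simply added and subtracted $(v\mid\rho_s)$ and used linearity of the pairing in its first slot. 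Third, I would invoke the ground-state hypothesis $\calE_v(s) = E(v)$ from~\eqref{ecal2} to conclude
\begin{equation}
E(v') \;\le\; E(v) + (v'-v\mid\rho_s),\qquad \forall v'\in X'.
\end{equation}
This is precisely the defining inequality~\eqref{eq:subdiff-defE} for $\rho_s$ to be a supergradient of $E$ at $v$, and the finiteness condition $E(v)\in\mathbb{R}$ required in~\eqref{eq:subdiff} is automatic since $E(v)=\calE_v(s)=\calE_0(s)+(v\mid\rho_s)$ is a sum of two real numbers (the first bounded below by $M$, and the second a continuous pairing of elements of $X'$ and $X$).

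There is essentially no obstacle here: the result is a direct unpacking of definitions combined with the affine dependence of $\calE_v(s)$ on $v$, which makes the ``tangent line'' through $(v,E(v))$ with slope $\rho_s$ lie weakly above the graph of $E$. The only point worth flagging is that the converse---that every supergradient of $E$ at $v$ arises as $\rho_s$ for some ground state $s$---is genuinely stronger and does not follow from this one-line argument; it requires the Lieb-functional machinery of Proposition~\ref{thm:subdiffs-conj} together with existence of a minimizer in the constrained search, and is not claimed here.
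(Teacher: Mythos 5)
Your proof is correct and follows exactly the paper's own argument: bound $E(v')\le\calE_{v'}(s)$ by admissibility of $s$ in the infimum, rewrite $\calE_{v'}(s)=\calE_v(s)+(v'-v\mid\rho_s)$, and use $\calE_v(s)=E(v)$. The remark on finiteness of $E(v)$ is a small addition the paper leaves implicit, but the route is the same.
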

The following proposition establishes an important relationship between ground states and subgradients of $F_\textnormal{CS}$:
\begin{proposition}
Let $F_\textnormal{CS} : X \to \RRinf$ be the constrained-search functional~\eqref{eq:F0-def}.
If $s$ is the ground state for some potential $v \in X^\prime$ with density $\rho_s$,
then $-v\in\partial^- F_\textnormal{CS}(\rho_s)$. 
\end{proposition}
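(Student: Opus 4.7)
The plan is to use Proposition~\ref{thm:equivalence} as a black box: since we have already established that the constrained-search functional $F_\textnormal{CS}$ is admissible, the proposition tells us that $-v \in \partial^- F_\textnormal{CS}(\rho_s)$ is equivalent to the Fenchel equality $E(v) = F_\textnormal{CS}(\rho_s) + (v \mid \rho_s)$. So the whole task reduces to verifying this single equality.

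To establish the Fenchel equality, I would sandwich $F_\textnormal{CS}(\rho_s) + (v \mid \rho_s)$ between two bounds. For the upper bound, since $s$ maps to $\rho_s$, the very definition of the constrained-search infimum in Eq.~\eqref{eq:F0-def} yields $F_\textnormal{CS}(\rho_s) \leq \calE_0(s)$, and adding $(v \mid \rho_s)$ to both sides together with Eq.~\eqref{ecal} gives
\begin{equation*}
F_\textnormal{CS}(\rho_s) + (v \mid \rho_s) \leq \calE_0(s) + (v \mid \rho_s) = \calE_v(s) = E(v),
\end{equation*}
where the last equality uses that $s$ is a ground state for $v$. For the lower bound, admissibility of $F_\textnormal{CS}$ together with the Fenchel inequality in Eq.~\eqref{eq:Fenchel} gives the reverse $E(v) \leq F_\textnormal{CS}(\rho_s) + (v \mid \rho_s)$ for free. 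Combining the two bounds yields the desired equality.

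With the Fenchel equality in hand, Proposition~\ref{thm:equivalence} (equivalently, the chain of rearrangements in Eq.~\eqref{eq:toFencheq}) immediately delivers $-v \in \partial^- F_\textnormal{CS}(\rho_s)$, completing the proof. There is no real obstacle here; the only thing to be a bit careful about is noting that the infimum defining $F_\textnormal{CS}(\rho_s)$ need not be attained at $s$ itself, but this does not matter since we only need the one-sided bound $F_\textnormal{CS}(\rho_s) \leq \calE_0(s)$, not equality.
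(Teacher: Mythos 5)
Your proposal is correct and is essentially the paper's own argument: both rest on the one-sided bound $F_\textnormal{CS}(\rho_s)\leq\calE_0(s)$ from the constrained-search definition, the ground-state identity $\calE_v(s)=E(v)$, and admissibility of $F_\textnormal{CS}$, yielding the Fenchel equality and hence the subgradient condition. The only cosmetic difference is that you invoke Proposition~\ref{thm:equivalence} for the final step while the paper redoes the rearrangement of Eq.~\eqref{eq:toFencheq} inline (and, as a byproduct, notes that the infimum is in fact attained at $s$, which, as you correctly observe, is not needed).
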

\begin{proof}
  Let $s' \in S$ with $\rho_{s'} = \rho_s$. According to Eq.~\eqref{eq:E-def}, we then have
\begin{equation}
E(v) = \calE_0(s) + (v|\rho_s) \leq \calE_0(s') + (v|\rho_s). 
\end{equation}
Subtracting $(v|\rho_s)$ from both sides, taking the infimum on the right-hand side, and using the definition in~Eq.~(\ref{eq:F0-def}), we obtain
\begin{equation}
\calE_0(s) \leq \inf_{s'\mapsto \rho_s} \calE_0(s') = F_\textnormal{CS}(\rho_s) \leq \calE_0(s).
\end{equation}
Thus, if there is a ground state $s\in S$ for $v\in X'$, then
\begin{equation}
  F_\textnormal{CS}(\rho_s) + (v|\rho_s) = \calE_0(s) + (v \vert \rho_s) = \calE_v(s) = E(v) 
\leq F_\textnormal{CS}(\rho) + (v|\rho), \quad \forall \rho \in X.
\end{equation}
Rearranging, we obtain
\begin{equation}
  F_\textnormal{CS}(\rho) \geq F_\textnormal{CS}(\rho_s) - (v|\rho-\rho_s), \quad \forall \rho \in
  X, \label{eq:fsub}
\end{equation}
demonstrating that $-v \in \partial^- F_\textnormal{CS}(\rho_s)$ and completing the proof.
\end{proof}
Conversely, does the subgradient relation $-v\in\partial^- F_\textnormal{CS}(\rho)$ imply that there exists a ground state
$s\mapsto\rho$? To answer this question affirmatively, we must assume that $F_\textnormal{CS}$ is 
\emph{expectation valued}, defined as follows:
\begin{definition}[Expectation-valued constrained-search functional]
  A constrained-search functional $F_\textnormal{CS}$ is called \emph{expectation
    valued} if, for every $\rho$ with $F_\textnormal{CS}(\rho)<+\infty$, there exists an
  $s_\rho \in S$ such that
  \begin{equation}
    F_\textnormal{CS}(\rho) = \inf_{s\mapsto\rho} \calE_0(s) = \calE_0(s_\rho), \quad
    s_\rho \mapsto\rho. \label{eq:infmin}
  \end{equation}
\end{definition}
In other words, if $F_\textnormal{CS}$ is expectation valued, then
the infimum in Eq.~\eqref{eq:F0-def} is a minimum if
$F_\textnormal{CS}(\rho)<+\infty$, implying that $F_\textnormal{CS}(\rho)$ is the expectation
value $\calE_0(s)$ of some state $s\mapsto\rho$.
If $F_\textnormal{CS}$ is expectation valued, it follows immediately that 
\begin{equation}
-v \in \partial^- F_\textnormal{CS}(\rho) \implies E(v) = F_\textnormal{CS}(\rho) + (v|\rho) \implies
  E(v) = \calE_0(s_\rho) + (v|\rho) = \calE_v(s_\rho) \implies s_\rho \in 
\argmin_{s^\prime} \calE_v(s^\prime) 
\end{equation}
and therefore that there exists a ground state $s\mapsto\rho$ of $v$.
Summarizing, we have proved the following:
\begin{proposition}\label{thm:subdiffs1}
  Suppose that $F_\textnormal{CS} : X \to \RRinf$ is an expectation-valued constrained-search functional. 
Then, for each $v \in X'$, we have
  \begin{equation}
s_\rho \in \argmin_{s^\prime} \calE_v(s^\prime) \iff
-v\in \partial^- F_\textnormal{CS}(\rho) \implies \rho \in \partial^+ E(v).
  \end{equation}
\end{proposition}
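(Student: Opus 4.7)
The plan is to decompose the statement into three separate implications and dispatch each using a result already established in the excerpt, with the expectation-valued hypothesis only doing real work in one direction.

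First I would prove the forward direction of the biconditional, namely $s_\rho \in \argmin_{s'} \calE_v(s') \Rightarrow -v \in \partial^- F_\textnormal{CS}(\rho)$. This is just Proposition 6 from the preceding paragraph (with $s = s_\rho$ and density $\rho_s = \rho$), so no extra work is needed here. Similarly, the final implication $-v \in \partial^- F_\textnormal{CS}(\rho) \Rightarrow \rho \in \partial^+ E(v)$ follows from Proposition \ref{thm:subdiffs0} applied to the admissible functional $F_0 = F_\textnormal{CS}$, since we already know (from the previous subsection) that the constrained-search functional is admissible.

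The interesting step is the converse direction $-v \in \partial^- F_\textnormal{CS}(\rho) \Rightarrow s_\rho \in \argmin_{s'} \calE_v(s')$, and this is precisely where the expectation-valued hypothesis is invoked. Starting from $-v \in \partial^- F_\textnormal{CS}(\rho)$, I would apply Proposition \ref{thm:subdiffs0} to $F_\textnormal{CS}$ to obtain
\begin{equation}
E(v) = F_\textnormal{CS}(\rho) + (v \mid \rho),
\end{equation}
and in particular $F_\textnormal{CS}(\rho) < +\infty$. Expectation-valuedness then supplies a state $s_\rho \in S$ with $s_\rho \mapsto \rho$ and $\calE_0(s_\rho) = F_\textnormal{CS}(\rho)$. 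Substituting back and using \eqref{ecal} yields
\begin{equation}
\calE_v(s_\rho) = \calE_0(s_\rho) + (v \mid \rho_{s_\rho}) = F_\textnormal{CS}(\rho) + (v \mid \rho) = E(v),
\end{equation}
so $s_\rho$ attains the infimum in the Rayleigh--Ritz principle \eqref{eq:E-def}, i.e.\ $s_\rho \in \argmin_{s'} \calE_v(s')$.

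There is no real obstacle to the argument; the only conceptual point worth emphasizing is why the expectation-valued assumption is indispensable. The chain of implications from a subgradient condition at the density level to the existence of a minimizing state requires that the infimum in the constrained search \eqref{eq:F0-def} actually be attained---otherwise one can only produce a minimizing sequence at the density level, not a single ground state $s_\rho$. This is exactly the content of the expectation-valued property, and it is what upgrades Proposition \ref{thm:subdiffs0} (a density-level statement) into the state-level biconditional claimed here.
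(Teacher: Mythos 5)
Your proof is correct and follows essentially the same route as the paper: the forward direction is the preceding proposition on ground states and subgradients of $F_\textnormal{CS}$, the converse uses expectation-valuedness to produce a minimizing state $s_\rho$ from the saturated Fenchel equality, and the final implication is Proposition~\ref{thm:subdiffs0} applied to the admissible functional $F_\textnormal{CS}$. The only quibble is your citation numbering for the ``ground state $\Rightarrow$ subgradient'' step, but the content you invoke is unambiguous and matches the paper's argument.
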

Remark: Proposition~\ref{thm:subdiffs1} for an expectation-valued constrained-search functional should be compared with
Proposition~\ref{thm:subdiffs0} for a general admissible density functional. Whereas,
for a general admissible density functional $F_\textnormal{CS}$, the
condition $E(v) = F_\textnormal{CS}(\rho) + (v|\rho)$ does {not} imply the existence
of a ground state $s\mapsto \rho$, this implication does hold for 
for an \emph{expectation-valued constrained-search functional}.
Compare also with
Proposition~\ref{thm:subdiffs-conj}, where the admissible density
functional is assumed to be convex and lower semi-continuous (i.e., the Lieb functional).

\begin{proposition}\label{thm:subdiffs3}
  Suppose that an expectation-valued constrained-search functional $F_\textnormal{CS} : X \to \RRinf$ is convex and lower
  semi-continuous so that $F_\textnormal{CS} =F$. 
Then, for each $v \in X'$, we have
  \begin{equation}
s_\rho \in \argmin_{s^\prime} \calE_v(s^\prime) \iff
-v\in \partial^- F_\textnormal{CS}(\rho) \iff \rho \in \partial^+ E(v).
  \end{equation}
\end{proposition}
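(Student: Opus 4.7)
The plan is to obtain Proposition~\ref{thm:subdiffs3} by assembling two previously established results, with the convexity/lower semi-continuity hypothesis serving precisely to supply the one implication that was missing in the general expectation-valued setting. First I would observe that since $F_\textnormal{CS}$ is admissible, convex and lower semi-continuous with respect to the topology on $X$, the uniqueness statement following Proposition~2 (there is exactly one lower semi-continuous convex admissible functional, namely the Lieb functional $F$) forces $F_\textnormal{CS} = F$, as the proposition already anticipates.

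Next I would invoke Proposition~\ref{thm:subdiffs1}, which applies directly because $F_\textnormal{CS}$ is expectation valued. This gives the equivalence
\begin{equation}
s_\rho \in \argmin_{s'} \calE_v(s') \iff -v \in \partial^- F_\textnormal{CS}(\rho),
\end{equation}
together with the one-way implication $-v \in \partial^- F_\textnormal{CS}(\rho) \implies \rho \in \partial^+ E(v)$. All that remains is the reverse implication $\rho \in \partial^+ E(v) \implies -v \in \partial^- F_\textnormal{CS}(\rho)$.

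For this final implication I would appeal to Proposition~\ref{thm:subdiffs-conj}, which is stated precisely for the Lieb functional and asserts the full equivalence $-v \in \partial^- F(\rho) \iff \rho \in \partial^+ E(v)$. Since the first step identified $F_\textnormal{CS}$ with $F$, we may substitute $F_\textnormal{CS}$ for $F$ in this equivalence, closing the chain of iff's. Chaining the three equivalences yields the statement.

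There is really no hard step: the proof is a bookkeeping exercise assembling Propositions~\ref{thm:subdiffs-conj} and~\ref{thm:subdiffs1}. The only point requiring a moment's care is the identification $F_\textnormal{CS} = F$, since Proposition~\ref{thm:subdiffs-conj} is stated only for the Lieb functional and not for a general admissible functional; once that identification is made explicit, the remaining implication is immediate from the skew-conjugate duality between $E$ and $F$.
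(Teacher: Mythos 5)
Your proof is correct and follows exactly the paper's route: the paper's own proof is the one-line instruction ``Combine Propositions~\ref{thm:subdiffs-conj} and~\ref{thm:subdiffs1},'' and you have simply spelled out that combination, including the (correct) observation that the hypothesis $F_\textnormal{CS}=F$ is what licenses the appeal to Proposition~\ref{thm:subdiffs-conj} for the reverse implication $\rho\in\partial^+E(v)\implies -v\in\partial^-F_\textnormal{CS}(\rho)$.
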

\begin{proof}
  Combine Propositions~\ref{thm:subdiffs-conj} and \ref{thm:subdiffs1}. 
\end{proof}

\subsection{Differentiability of $E(v)$}
\label{sec:diffability}

We consider here conditions for differentiability of $E : X' \to \RRminf$.
From Eq.~(\ref{eq:cs}), it follows that $E$
is a concave and upper semi-continuous function. Upper
semi-continuity is, however, a rather weak property. What is
needed to make $E$ \emph{continuous}? It is a fact that a concave (convex) map over a Banach space is
continuous on the interior of its domain~\cite{VanTiel}. Hence, if we can show that $E(v)$ is
finite for every $v\in X'$, then the domain $\dom(E) = X'$ and continuity follows.
The trick to show finiteness of $E(v)$ in the context of standard DFT is to
observe that $s\mapsto \calV_v(s) = (v | \rho_s) $ is \emph{relatively bounded with
$\calE_0$-bound smaller than one}, meaning that, for each $v \in X^\prime$, there 
exist $\epsilon \in (0,1)$ and $C_\epsilon \geq 0$ such that 
\begin{equation}
  |\calV_v(s)| \leq \epsilon \calE_0(s) + C_\epsilon, \quad \forall x \in S.
\end{equation}
Assuming that $\calV_v$ is relatively bounded with $\epsilon<1$  and with no
assumption on $C_\epsilon$, we obtain:
\begin{equation}
  \calE_v(s) =   \calE_0(s) + \calV_v(s) \geq \calE_0(s) - \epsilon \calE_0(s) -
  C_\epsilon = (1-\epsilon)\calE_0(s) - C_\epsilon.
\end{equation}
Taking the infimum over $s\in S$ and using the fact that $\calE_0(s)$ is by definition below
bounded, we find
\begin{equation}
  E(v) \geq (1-\epsilon)M - C_\epsilon > -\infty.
\end{equation}

\begin{proposition}
  Let $\calE_0 : S \to \RR$, $E: X^\prime \to \RR \cup \{-\infty\}$, and $\calV_v: S \to \RR$ be as described above.
  If $\calV_v$ is relatively bounded with respect to
  $\calE_0$ with bound $\epsilon<1$ for each $v \in X'$, then $E$ is continuous on $X^\prime$.
\end{proposition}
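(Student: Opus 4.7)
The plan is to proceed in two stages: first, use the relative-boundedness hypothesis to show that $E$ is finite on all of $X'$; second, invoke a standard convex-analysis fact to upgrade this to continuity.

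For finiteness, the upper bound is essentially free: fixing any $s_0 \in S$ gives $E(v) \leq \calE_0(s_0) + (v \mid \rho_{s_0}) < +\infty$, since $\calE_0(s_0) \in \RR$ and the pairing is well-defined for $v \in X'$. The lower bound is the content of the computation displayed immediately before the proposition: the relative bound $|\calV_v(s)| \leq \epsilon \calE_0(s) + C_\epsilon$ together with $\calE_0(s) \geq M$ yields
\begin{equation*}
\calE_v(s) = \calE_0(s) + \calV_v(s) \geq (1-\epsilon)\calE_0(s) - C_\epsilon \geq (1-\epsilon)M - C_\epsilon,
\end{equation*}
so that taking the infimum over $s\in S$ gives $E(v) \geq (1-\epsilon)M - C_\epsilon > -\infty$. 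Hence $\dom(E) = X'$.

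For the second stage, I would invoke the classical fact cited earlier in this section: a concave function on a Banach space is continuous on the interior of its effective domain~\cite{VanTiel}. Since $E$ is concave as a pointwise infimum of the affine maps $v \mapsto \calE_0(s) + (v \mid \rho_s)$, and since $\dom(E) = X'$ has interior equal to $X'$, continuity of $E$ on all of $X'$ follows immediately.

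The main (and essentially only) difficulty is conceptual rather than technical: it lies in recognizing that the relative-boundedness condition with $\epsilon < 1$ is precisely what is required to exclude the single obstruction to continuity, namely that the concave upper semi-continuous function $E$ could a priori take the value $-\infty$ on some subset of $X'$. Once finiteness on $X'$ is secured, the Banach-space structure does the rest via the cited convex-analysis result.
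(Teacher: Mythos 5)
Your proof is correct and follows essentially the same route as the paper: establish $\dom(E)=X'$ via the relative bound $\calE_v(s) \geq (1-\epsilon)\calE_0(s) - C_\epsilon \geq (1-\epsilon)M - C_\epsilon$ (the paper carries out this computation in the text immediately preceding the proposition), then invoke the cited fact that a concave map on a Banach space is continuous on the interior of its effective domain. Your explicit upper bound $E(v)\leq \calE_0(s_0)+(v\mid\rho_{s_0})$ is a small addition the paper leaves implicit, but otherwise the two arguments coincide.
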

\begin{proof} $E$ is everywhere finite, so $\dom(E) = X'$. Therefore $E$ is
  continuous at any point $v\in X'$.
\end{proof}

Having determined sufficient conditions for continuity of $E$, what
about differentiability? By definition, $E$ is superdifferentiable at
$v\in X'$ if 
the superdifferential $\partial^+ E(v)$ is non-empty. 
Clearly, if $E$ is
differentiable at $v$, the superdifferential is a singleton, $\partial^+
E(v) = \{ \nabla E(v) \}$. The converse is in general \emph{almost} true \cite{VanTiel}:
\begin{theorem}
  Let $X$ be a Banach space. A convex (concave) map $f:X\to\RR\cup\{-\infty,+\infty\}$
  is G\^ateaux differentiable at $x \in X$ if and only if it is
  continuous at $x$ with a unique subgradient (supergradient).
\end{theorem}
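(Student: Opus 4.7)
The plan is to reduce the concave case to the convex one by replacing $f$ with $-f$, so I will assume throughout that $f$ is convex with $f(x) \in \RR$. The central tool is the one-sided directional derivative
\[
f'_+(x;h) = \lim_{t \downarrow 0} \frac{f(x+th) - f(x)}{t} = \inf_{t > 0} \frac{f(x+th) - f(x)}{t},
\]
which is well defined in $\RR \cup \{\pm\infty\}$ because convexity makes the difference quotient non-decreasing in $t > 0$. That same monotonicity yields the key characterization $u \in \partial^- f(x) \iff f'_+(x;h) \geq (u\,\vert\,h)$ for every $h \in X$, obtained by dividing the defining subgradient inequality~\eqref{eq:subdiff-def} by $t > 0$ and letting $t \downarrow 0$.

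For the direction ``G\^ateaux differentiable $\Rightarrow$ continuous with unique subgradient,'' let $u_0 \in X'$ denote the G\^ateaux derivative. Monotonicity of the difference quotient immediately gives $f(x+h) - f(x) \geq (u_0\,\vert\,h)$, so $u_0 \in \partial^- f(x)$. Any other $u \in \partial^- f(x)$ satisfies $(u\,\vert\,h) \leq f'_+(x;h) = (u_0\,\vert\,h)$ for every $h$, and replacing $h$ with $-h$ forces equality, so $u = u_0$. The delicate ingredient is continuity: in a general infinite-dimensional Banach space, G\^ateaux differentiability of a convex function at a point does \emph{not} by itself entail continuity there, which is precisely why the author prefaces the statement with ``\emph{almost} true.'' In the settings of interest here one either supplies continuity externally (as in the preceding proposition, which establishes continuity of $E$ on all of $X'$) or restricts to function classes for which it is automatic.

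For the converse direction, continuity of the convex $f$ at $x$ implies local Lipschitz continuity on a neighborhood of $x$; consequently $h \mapsto f'_+(x;h)$ is finite, positively homogeneous, and subadditive, i.e.\ a continuous sublinear functional on $X$. A Hahn--Banach argument expresses any such sublinear functional as the pointwise supremum of the continuous linear functionals it dominates, and in the present context those dominated continuous linear functionals are exactly the elements of $\partial^- f(x)$. Assuming $\partial^- f(x) = \{u_0\}$, the supremum collapses and $f'_+(x;h) = (u_0\,\vert\,h)$; since also $f'_-(x;h) = -f'_+(x;-h) = (u_0\,\vert\,h)$, the two-sided directional derivative exists at every $h$ and equals $u_0$, giving G\^ateaux differentiability with derivative $u_0$. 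The main obstacle is the forward continuity claim noted above; the remaining steps are routine consequences of monotonicity of difference quotients together with the Hahn--Banach representation of sublinear functionals.
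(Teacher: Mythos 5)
The paper gives no proof of this theorem---it is quoted from Ref.~\cite{VanTiel}---so there is no in-house argument to compare against. Your proof is the standard one and, in the directions the paper actually uses, it is sound: monotonicity of the difference quotients of a convex function gives the one-sided directional derivative $f'_+(x;\cdot)$ and the characterization of $\partial^- f(x)$ as the set of continuous linear functionals dominated by it; uniqueness of the subgradient at a point of G\^ateaux differentiability follows by testing the domination inequality against $\pm h$; and the converse correctly chains continuity at $x$ $\Rightarrow$ local Lipschitz continuity $\Rightarrow$ $f'_+(x;\cdot)$ is a continuous sublinear functional, whose Hahn--Banach support representation collapses to a single continuous linear functional precisely when $\partial^- f(x)$ is a singleton, whence $f'_-(x;h)=-f'_+(x;-h)=f'_+(x;h)$ and the G\^ateaux derivative exists.

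Your refusal to prove the forward continuity claim is not a defect of your argument but of the statement: as a literal biconditional the theorem is false. Take a discontinuous linear functional $\ell$ on an infinite-dimensional Banach space and set $f=|\ell|^{3/2}$; then $f$ is convex and finite everywhere, $f(th)/t=|t|^{1/2}\,|\ell(h)|^{3/2}\to 0$ as $t\downarrow 0$, so $f$ is G\^ateaux differentiable at the origin with derivative $0\in X'$, yet $f$ is discontinuous there. The source theorem in Ref.~\cite{VanTiel} takes continuity as a standing hypothesis and asserts only the equivalence of G\^ateaux differentiability with uniqueness of the subgradient; that weaker form is all the paper needs, since Proposition~\ref{thm:subdiffs4} invokes it only after establishing that $E$ is everywhere finite and hence continuous. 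With continuity moved into the hypotheses, your write-up is a complete proof.
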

Hence a unique sub- or supergradient does not by itself guarantee
continuity and therefore not (G\^ateaux) differentiability: \emph{A subdifferential may be a singleton even if 
the function is non-differentiable.}
This subtle point illustrates the limitations of finite-dimensional intuition and has been
the source of misunderstanding in the DFT literature. For a discussion, see Ref.\;\cite{Lammert2005} and references therein.

We can finally establish a useful statement on the differentiability
of $E : X' \to \RRminf$:
\begin{proposition}\label{thm:subdiffs4}
  Let $E : X' \to \RR$ be everywhere finite and given by the Rayleigh--Ritz
  principle in Eq.~(\ref{eq:E-def}) and assume that $F_\textnormal{CS}$ 
  in Eq.~(\ref{eq:F0-def}) is expectation valued and equal to the Lieb functional (i.e., convex and lower semi-continuous).
  Then $E$ is differentiable at $v$ if
  and only if all ground states $s \in
  \operatornamewithlimits{argmin}_{s'\in S} \calE_v(s')$ supported by $v$ have the same
  ground-state density, $s\mapsto \rho \in X$. In particular, $E$ is
  differentiable at $v$ if there exists a unique ground state $s$ for
  $v$.
\end{proposition}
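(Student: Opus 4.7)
The plan is to combine the stated theorem characterizing differentiability for continuous concave maps with Proposition~\ref{thm:subdiffs3}, which identifies the superdifferential $\partial^+ E(v)$ as the set of ground-state densities. The chain of ideas is short; the main task is to assemble the prerequisites correctly.

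First, I would establish that $E$ is continuous on $X'$. By hypothesis $E$ is everywhere finite, so $\dom(E) = X'$. Since $E$ is a concave map defined on a Banach space (given by the Rayleigh--Ritz principle, it is concave and upper semi-continuous as noted right after Eq.~\eqref{eq:admissible-dft}), and is everywhere finite, the cited fact that concave maps are continuous on the interior of their effective domain implies that $E$ is continuous at every $v\in X'$. With continuity in hand, the Theorem from Sec.~\ref{sec:diffability} applies: $E$ is G\^ateaux differentiable at $v$ if and only if $\partial^+ E(v)$ is a singleton.

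Next, I would translate the singleton condition into a statement about ground-state densities using Proposition~\ref{thm:subdiffs3}. Since $F_\textnormal{CS}$ is expectation valued and equal to the Lieb functional, that proposition gives, for each $\rho\in X$, the equivalence $s_\rho \in \argmin_{s'} \calE_v(s') \iff \rho \in \partial^+ E(v)$, where $s_\rho$ is an expectation-value achiever for $\rho$. In particular, every supergradient of $E$ at $v$ is realized as the density of some ground state, and conversely every ground state $s$ of $v$ has density $\rho_s \in \partial^+ E(v)$. Hence
\begin{equation*}
\partial^+ E(v) = \{\, \rho_s \mid s \in \argmin\nolimits_{s'} \calE_v(s') \,\}.
\end{equation*}

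Combining the two ingredients, $\partial^+ E(v)$ is a singleton precisely when all ground states supported by $v$ share a common density, proving the main equivalence. The final ``in particular'' clause is immediate: uniqueness of the ground state $s$ forces uniqueness of its density $\rho_s$ and hence a singleton superdifferential. I do not foresee a genuine obstacle here; the only subtlety worth stating explicitly is that ``unique supergradient'' does \emph{not} by itself give differentiability without continuity of $E$, which is exactly why the finiteness hypothesis on $E$ must be invoked before applying the theorem.
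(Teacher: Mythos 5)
Your proposal is correct and follows essentially the same route as the paper's own proof: continuity of $E$ from everywhere-finiteness of a concave map, the characterization of G\^ateaux differentiability as continuity plus a singleton superdifferential, and Proposition~\ref{thm:subdiffs3} to identify $\partial^+ E(v)$ with the set of ground-state densities. Your version merely spells out the intermediate steps (in particular the identity $\partial^+ E(v) = \{\rho_s \mid s \in \argmin_{s'} \calE_v(s')\}$) that the paper's terser proof leaves implicit.
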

\begin{proof}
  Being continuous, $E$ is differentiable at $v$ if and only if
  $\partial^+ E(v)$ is a singleton. But by Proposition~\ref{thm:subdiffs3},  
  $\partial^+ E(v)$ is a singleton if and only if all ground states of
  $v$ have the same density, using the assumption that $F_\textnormal{CS}$ is the
  Lieb functional. 
\end{proof}

Remark: The message here is that given the Rayleigh--Ritz variation
principle, it is not sufficient to know that a potential $v\in X'$ has a
unique ground-state density $\rho=\rho_s$ to prove
differentiability of $E$ at $v$. To identify the unique
supergradient $\rho\in \partial^+ E(v)$ with a ground-state density,  
$F_\textnormal{CS}$ must be identical with the Lieb functional (convex and
lower semi-continuous) and expectation valued.
If the Lieb functional happens to be
different from every possible constrained-search functional,
differentiability may fail, even if $v$ has a unique ground-state
density $\rho_s$. A counterexample is given in
Sec.\,\ref{sec:dft-application}, where we consider the Hilbert space
$X_\text{H} = L^2(\RR^3)$ rather than the usual Banach space $X_\text{L} =
L^1(\RR^3)\cap L^3(\RR^3)$. Unlike $E : X_\text{L}\to\RR$, the function
$E : X_\text{H}\to \RR$ is not differentiable. 

Starting from a Hohenberg--Kohn variation principle with an arbitrary
admissible density functional, it seems hard to prove differentiability from the
sole assumption that $v$ has a unique ground-state density
$\rho_s$. Identifying the proper constrained-search functional, based on
the Rayleigh--Ritz variation principle, seems
the only way out of the problem.

\section{Application to DFT for molecular systems}
\label{sec:dft-application}

\subsection{Ground states and ground-state densities}

For $N$-electron systems, each normalized wave function has a density
$\rho \in L^1(\RR^3)$ with $\rho \geq 0$ almost everywhere and 
$\int \! \rho(\mathbf r)\,\mathrm d \mathbf r = N$. The set of states $S$ can be taken to be either the $L^2$
normalized states $\psi \in H^1_\textnormal{S}(\RR^{3N})$ (pure states) or the set of density
matrices $\Gamma = \sum_k \lambda_k \ket{\psi_k}\bra{\psi_k}$ constructed as convex combinations
from an orthonormal set $\{\psi_k\}\subset H^1_\textnormal{S}(\RR^{3N})$ with
$\lambda_k\geq 0$ and $\sum_k\lambda_k = 1$ (mixed states). In the pure-state and mixed-state cases, respectively,
constrained search gives the functionals $F_\text{LL}$ defined in Eq.~(\ref{eq:FLL-def}) and 
$F_\text{DM}$ defined in Eq.~\eqref{eq:FDM-def}:
\begin{alignat}{3}
F_\text{LL}(\rho) &= \inf_{\Psi \mapsto \rho} \calE_0(\psi), \quad  &
\calE_0(\psi) &= \braket{\psi|\hat{T}+\hat{W}|\psi}, &\quad \calV_v(\psi) &=
\braket{\psi|\hat{V}|\psi} =(\rho_\psi \vert v), \\
F_\text{DM}(\rho) &= \inf_{\Gamma \mapsto \rho} \calE_0(\Gamma), \quad  &
\calE_0(\Gamma) &= \Tr[(\hat{T}+\hat{W})\Gamma], &\quad \calV_v(\Gamma) &= \Tr(\Gamma\hat{V}) = (\rho_\Gamma \vert v),
\end{alignat}
where $\hat{V}$ is the multiplication operator associated with $v$.
It is obvious that $F_\text{DM} \leq
F_\text{LL}$. It is less obvious, but true, that there are $\rho\in L^1$ for
which $F_\text{DM}(\rho) < F_\text{LL}(\rho)$. 

In a classic paper \cite{Kato1951}, Kato demonstrated that, for each
$v \in L^2(\RR^3) + L^\infty(\RR^3)$, the $N$-electron Hamiltonian
$\hat{H} = \hat{T} + \hat{W} + \hat{V}$ is self-adjoint on
$L^2(\RR^{3N})$ with domain $H^2_\textnormal{S}(\RR^{3N})$ and bounded below. More
generally, Simon demonstrated that, for each $v\in L^{3/2}(\RR^3) +
L^\infty(\RR^3)$ (which includes all Coulomb potentials), $\calV_v$ is relatively $\calE_0$-bounded with
an arbitrarily small bound\;\cite{Simon1971}:
\begin{theorem}\label{thm:Simon}
  Let $v \in L^{3/2}(\RR^3) + L^\infty(\RR^3)$ and let $\hat{V}$ be
  the corresponding multiplication operator. Then, for any
  $\epsilon>0$, there exists a constant $C_\epsilon \geq 0$, such that for all $\psi \in
  H^1_\textnormal{S}(\RR^{3N})$ with $\|\psi\|_2=1$,
  \begin{equation}
    |\braket{\psi|\hat{V}|\psi}| \leq \epsilon
    \braket{\psi|\hat{T}+\hat{W}|\psi} + C_\epsilon.
  \end{equation}
\end{theorem}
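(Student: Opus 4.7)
My plan is to reduce the statement to a one-body estimate combining H\"older's inequality with the three-dimensional Sobolev embedding $H^1(\RR^3) \hookrightarrow L^6(\RR^3)$. Since the Coulomb interaction $\hat W$ is non-negative, we have $\braket{\psi|\hat T + \hat W|\psi} \geq \braket{\psi|\hat T|\psi}$, so it suffices to prove the inequality with $\hat T$ in place of $\hat T + \hat W$. Writing $\hat V = \sum_{i=1}^N v(\mathbf r_i)$, by symmetry in the particle labels it further suffices, up to a factor of $N$, to bound $\int |v(\mathbf r_1)|\,|\psi|^2\,d^{3N}\mathbf r$ in terms of $\|\nabla_1 \psi\|_2^2$ and a bounded remainder.

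To isolate a small-coupling part, first decompose $v = u + w$ with $u\in L^{3/2}(\RR^3)$ and $w\in L^\infty(\RR^3)$ according to the hypothesis, and then further split $u = u_\ell + u_b$ where $u_\ell := u\,\mathbf{1}_{\{|u|>M\}}$ and $u_b := u\,\mathbf{1}_{\{|u|\leq M\}}$ for a threshold $M>0$ chosen below. Dominated convergence applied to $|u|^{3/2}\mathbf{1}_{\{|u|>M\}}$ gives $\|u_\ell\|_{3/2}\to 0$ as $M\to\infty$, while $u_b+w \in L^\infty$ with $\|u_b+w\|_\infty \leq M + \|w\|_\infty$. The bounded piece is then estimated trivially using $\|\psi\|_2=1$:
\[
\Big|\braket{\psi\big|\textstyle\sum_{i}(u_b+w)(\mathbf r_i)\big|\psi}\Big| \leq N\bigl(M+\|w\|_\infty\bigr),
\]
which will serve as the constant $C_\epsilon$.

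For the $u_\ell$ piece, fix $i$, freeze all coordinates other than $\mathbf r_i$, and apply H\"older with exponents $3/2$ and $3$ in $\mathbf r_i$:
\[
\int_{\RR^3}|u_\ell(\mathbf r_i)|\,|\psi|^2\,d\mathbf r_i \leq \|u_\ell\|_{3/2}\,\|\psi\|_{L^6(d\mathbf r_i)}^2.
\]
The critical Sobolev inequality $\|\phi\|_6 \leq C_S\|\nabla\phi\|_2$ on $\RR^3$ bounds $\|\psi\|_{L^6(d\mathbf r_i)}^2 \leq C_S^2\|\nabla_i\psi\|_{L^2(d\mathbf r_i)}^2$; integrating over the frozen coordinates by Fubini, summing over $i$, and using $\sum_i\|\nabla_i\psi\|_2^2 = 2\braket{\psi|\hat T|\psi}$ in atomic units, we obtain
\[
\Big|\braket{\psi\big|\textstyle\sum_i u_\ell(\mathbf r_i)\big|\psi}\Big| \leq 2C_S^2\,\|u_\ell\|_{3/2}\,\braket{\psi|\hat T|\psi}.
\]
Choosing $M$ large enough that $2C_S^2\|u_\ell\|_{3/2} < \epsilon$ closes the argument.

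The only delicate ingredient is the splitting that makes $\|u_\ell\|_{3/2}$ arbitrarily small, which is however immediate from dominated convergence. The rest is a routine combination of H\"older's inequality with the critical Sobolev embedding in dimension three, and it is precisely this critical embedding that selects the exponent $3/2$ appearing in the hypothesis on $v$. The antisymmetry encoded in $H^1_\textnormal{S}$ plays no role in the estimate, as expected.
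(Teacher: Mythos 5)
Your proof is correct. Note that the paper itself gives no argument for Theorem~\ref{thm:Simon}: its ``proof'' is simply a citation to Simon's 1971 work, so you have supplied a self-contained derivation of a result the authors defer to the literature. Your route is the standard one underlying that reference (and Kato--Rellich-type bounds generally): drop $\hat W\geq 0$, truncate the $L^{3/2}$ part of $v$ at height $M$ so that the unbounded remainder has arbitrarily small $L^{3/2}$ norm by dominated convergence, absorb the bounded part into the constant using $\|\psi\|_2=1$, and control the small-norm part coordinate-by-coordinate via H\"older with exponents $(3/2,3)$ followed by the critical Sobolev embedding $\|\phi\|_6\leq C_S\|\nabla\phi\|_2$ on $\RR^3$ and Fubini over the frozen variables. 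All the steps check out, including the slicing argument (for a.e.\ frozen coordinates the one-particle slice lies in $H^1(\RR^3)$), the identity $\sum_i\|\nabla_i\psi\|_2^2 = 2\braket{\psi|\hat T|\psi}$ in atomic units, and the observation that the choice of $M$ depends only on $v$ and $\epsilon$, not on $\psi$, so $C_\epsilon = N(M+\|w\|_\infty)$ is a legitimate uniform constant. Your closing remarks are also accurate: the exponent $3/2$ is exactly what the critical embedding demands, and permutation symmetry is irrelevant to the estimate. The one cosmetic caveat is that the decomposition $v=u+w$ is not unique, so $C_\epsilon$ depends on the chosen decomposition; this does not affect the statement, which only asserts existence of some $C_\epsilon$.
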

\begin{proof}
  See Ref.~\cite{Simon1971}.
\end{proof}
In the pure-state case, therefore, each $v \in L^{3/2}(\RR^3) +
L^\infty(\RR^3)$ gives a $\calV_v$ relatively bounded
by $\calE_0$, with bound $\epsilon$ arbitrarily small. Lieb proved
that $F_\text{LL}$ and $F_\text{DM}$ are both
expectation valued~\cite{Lieb1983}. In the same publication, a proof
(due to Simon) that $F_\text{DM} : L^1(\RR^3) \to \RRinf$ is convex and
lower semi-continuous was given.
Using the Sobolev embedding theorem, it can be shown that for each 
$\psi \in H^1_\textnormal{S}(\RR^{3N})$, the corresponding density $\rho \mapsfrom \psi$ belongs to 
$L^3(\RR^3)$ \cite{Lieb1983}. Thus, it is appropriate to consider the
Banach space 
\begin{equation}
  X_\text{L}  := L^1(\RR^3)\cap L^3(\RR^3) 
\end{equation}
with norm $\Vert \cdot \Vert = \|\cdot\|_{L^1} + \|\cdot\|_{L^3}$,
whose dual space is 
\begin{equation}
  X_\text{L}' = L^{3/2}(\RR^3) + L^{\infty}(\RR^3)
\end{equation}
with the topology induced by $\|\cdot\|$ \cite{Liu1968}. 
Since convergence in
$X_\text{L}$ implies convergence in $L^1$, 
$F_\text{DM} : X_\text{L} \to \RRinf$ is lower semi-continuous as well.
Applying Proposition~\ref{thm:subdiffs3} for $F_\text{DM}$ (which is expectation valued and lower semi-continuous convex) and
Proposition~\ref{thm:subdiffs1} for $F_\text{LL}$ (which is expectation valued but not lower semi-continuous convex), we obtain
\begin{corollary}
\label{thm:characterization} 
  Let $v\in X_{\textnormal L}^\prime  = L^{3/2}(\RR^3) + L^\infty(\RR^3)$. Then,
  \begin{alignat}{2}
 \Gamma \in \argmin_{\Gamma^\prime}\, \Bigl(\calE_0(\Gamma^\prime) +  \left(\rho_{\Gamma^\prime} \vert\, v \right) \Bigr) &\iff -v \in \partial^-
    F_\mathrm{DM}(\rho_\Gamma) &&\iff \rho_\Gamma \in \partial^+ E(v), \\
 \psi \in \argmin_{\psi^\prime}\, \Bigl(\calE_0(\psi^\prime) +  \left(\rho_{\psi^\prime} \vert\, v \right) \Bigr) &\iff -v \in \partial^-
    F_\mathrm{LL}(\rho_\psi) &&\implies \rho_\psi \in \partial^+ E(v).
  \end{alignat}
\end{corollary}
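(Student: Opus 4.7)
The plan is to verify the hypotheses of Propositions~\ref{thm:subdiffs3} and~\ref{thm:subdiffs1} in the concrete molecular DFT setting, and then directly invoke those abstract results, handling the mixed-state and pure-state cases separately because $F_\mathrm{DM}$ enjoys convexity and lower semi-continuity while $F_\mathrm{LL}$ does not.

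First I would set up the common ground shared by both cases. Take $X = X_\mathrm{L} = L^1(\RR^3)\cap L^3(\RR^3)$ with dual $X' = L^{3/2}(\RR^3)+L^\infty(\RR^3)$, so that the pairing $(v\,|\,\rho)$ is continuous. The Sobolev embedding implies that every density arising from an element of $H^1_\mathrm{S}(\RR^{3N})$ lies in $X_\mathrm{L}$, so the constrained-search definitions indeed map into $X_\mathrm{L}\to\RRinf$. Next I would verify that $E: X'\to\RR$ is everywhere finite: Theorem~\ref{thm:Simon} gives $\calV_v$ relatively $\calE_0$-bounded with arbitrarily small bound $\epsilon<1$ for every $v\in X'$, which by the argument in Sec.\,\ref{sec:diffability} yields $E(v)\geq(1-\epsilon)M-C_\epsilon>-\infty$. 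Finiteness of $E$ is needed to invoke the subdifferential framework without worrying about $\pm\infty$ values.

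For the mixed-state statement, I would check the three hypotheses of Proposition~\ref{thm:subdiffs3}. Admissibility of $F_\mathrm{DM}$ follows from Proposition~4 applied to the density-matrix setup. Lieb showed $F_\mathrm{DM}$ is expectation valued (the infimum in the constrained search is attained). Simon's convexity/lower semi-continuity argument gives lsc of $F_\mathrm{DM}:L^1(\RR^3)\to\RRinf$; since convergence in $X_\mathrm{L}$ is strictly stronger than convergence in $L^1$, lower semi-continuity transfers to $F_\mathrm{DM}:X_\mathrm{L}\to\RRinf$, and convexity on $X_\mathrm{L}$ is automatic from convexity on $L^1$. Hence $F_\mathrm{DM}$ coincides with the Lieb functional $F$ on $X_\mathrm{L}$, and Proposition~\ref{thm:subdiffs3} delivers the double equivalence directly.

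For the pure-state statement, I would instead appeal to Proposition~\ref{thm:subdiffs1}, using only admissibility and the expectation-valued property of $F_\mathrm{LL}$, both established by Lieb. Convexity and lower semi-continuity do not hold for $F_\mathrm{LL}$ in general (there exist $\rho$ with $F_\mathrm{DM}(\rho)<F_\mathrm{LL}(\rho)$, as noted in the paper), so Proposition~\ref{thm:subdiffs-conj} cannot be invoked and the last arrow $\rho_\psi\in\partial^+E(v)$ is only a one-way implication. The asymmetry between the two lines of the corollary is therefore dictated precisely by the failure of $F_\mathrm{LL}$ to be lsc/convex. The main obstacle I anticipate is the lower semi-continuity verification for $F_\mathrm{DM}$ in the chosen topology: one must be careful that the lsc property proved by Simon in $L^1(\RR^3)$ really transfers to $X_\mathrm{L}$, which relies on the continuous inclusion $X_\mathrm{L}\hookrightarrow L^1(\RR^3)$; everything else reduces to bookkeeping of the abstract propositions.
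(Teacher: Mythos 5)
Your proposal is correct and follows essentially the same route as the paper: the paper likewise verifies that $F_\mathrm{DM}$ is expectation valued, convex, and lower semi-continuous on $X_\mathrm{L}$ (transferring lsc from $L^1$ via the continuous inclusion) and then applies Proposition~\ref{thm:subdiffs3}, while invoking Proposition~\ref{thm:subdiffs1} for the expectation-valued but non-lsc/convex $F_\mathrm{LL}$. Your additional remarks on finiteness of $E$ and the source of the asymmetry between the two lines are consistent with the paper's discussion.
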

Note that $\rho \in \partial^+ E(v)$
does not imply the existence of a ground-state wave function for $v$ such
that $\psi\mapsto\rho$, only the existence of a mixed ground state $\Gamma \mapsto \rho$. 
For example, if $v \in X_\text{L}^\prime$ has a two-fold degeneracy with pure ground states $\psi_1$ and $\psi_2$,
then $\lambda\rho_1 + (1-\lambda)\rho_2 \in \partial^+ E(v)$ but there may
be no pure ground state with this density. On the other hand,
$\lambda\ket{\psi_1}\bra{\psi_1} + (1-\lambda)\ket{\psi_2}\bra{\psi_2}$
is a mixed ground state with this density.

\subsection{Topology dependence of the Lieb functional}
\label{sec:topology}

Recall that the Lieb functional $F : X \to \RRinf$ depends explicitly
on the Banach space $X$, including its topology. We now demonstrate
this using an explicit example. Motivated by the set inclusion
\begin{equation}
X_\text{L} = L^1(\RR^3)\cap L^3(\RR^3)\subset X_\text{H} := L^2(\RR^3) 
\end{equation}
and the fact that $X_\text{H}$ is Hilbert space with a simpler
structure than the non-reflexive space $X_\text{L}$, it is tempting
to consider $F_\text{DM}$ as a function on $X = X_\text{H}$.
However, the embedding $X_\text{L} \subset X_\text{H}$ is not continuous: Convergence in
$X_\text{L}$ does not imply convergence in $X_\text{H}$. Even if $F_\text{DM}$ is
lower semi-continuous with respect to $X_\text{L}$, it may fail to be
so in $X_\text{H}$. To see this, we note that, since $X_\text{H}$ is a
Hilbert space, $X_\text{H}'=X_\text{H}$, which does not admit constant potentials: If $v\in X_\text{H}$, then 
$v + c \notin X_\text{H}$ for each constant $c \neq 0$. This observation allows us
to prove the following result:
\begin{proposition}\label{thm:nonpos}
  For each $v\in X_\textnormal{H}$, we have $E(v) \leq 0$. If $v\geq 0$ almost everywhere, then $E(v) = 0$. 
\end{proposition}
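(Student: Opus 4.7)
The plan is to prove the upper bound $E(v)\leq 0$ by exhibiting an explicit trial sequence obtained by uniform dilation of a fixed wavefunction, and to prove the matching lower bound, under the additional assumption $v\geq 0$ almost everywhere, by observing that every term in $\hat H(v)$ is then a nonnegative operator. The key analytical input for the upper bound is that $v$ lives in $L^2(\RR^3)$, which is the only place the topology of $X_\textnormal{H}$ really enters.

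For the upper bound, I would fix a smooth, compactly supported, normalized and properly antisymmetrized $\psi_0\in H^1_\textnormal{S}(\RR^{3N})$ and consider the dilated family
\begin{equation*}
  \psi_L(\mathbf x_1,\dots,\mathbf x_N) := L^{-3N/2}\,\psi_0(\mathbf x_1/L,\dots,\mathbf x_N/L), \qquad L>0,
\end{equation*}
each of which is normalized and antisymmetric. A direct calculation, using that the Coulomb kernel $|\mathbf r_i-\mathbf r_j|^{-1}$ is homogeneous of degree $-1$, yields
\begin{equation*}
  \braket{\psi_L|\hat T|\psi_L} = L^{-2}\braket{\psi_0|\hat T|\psi_0}, \qquad \braket{\psi_L|\hat W|\psi_L} = L^{-1}\braket{\psi_0|\hat W|\psi_0}.
\end{equation*}
The density scales as $\rho_{\psi_L}(\mathbf r) = L^{-3}\rho_{\psi_0}(\mathbf r/L)$, so after the change of variables $\mathbf r = L\mathbf r'$ one has $(v\,|\,\rho_{\psi_L}) = \int v(L\mathbf r')\rho_{\psi_0}(\mathbf r')\,\mathrm d\mathbf r'$, which Cauchy--Schwarz bounds by $\|v(L\cdot)\|_2\,\|\rho_{\psi_0}\|_2 = L^{-3/2}\|v\|_2\,\|\rho_{\psi_0}\|_2$. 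The choice of $\psi_0$ makes $\rho_{\psi_0}$ smooth and compactly supported, hence in $L^2(\RR^3)$, so this bound tends to $0$ as $L\to\infty$. All three contributions to $\braket{\psi_L|\hat H(v)|\psi_L}$ therefore vanish, and the Rayleigh--Ritz principle gives $E(v)\leq 0$.

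For the lower bound when $v\geq 0$ almost everywhere, the argument is immediate: $\hat T\geq 0$ as a quadratic form on $H^1_\textnormal{S}(\RR^{3N})$, the electronic repulsion $\hat W$ is a sum of nonnegative multiplication operators and hence itself nonnegative, and the multiplication operator associated with $v\geq 0$ is nonnegative. Hence $\braket{\psi|\hat H(v)|\psi}\geq 0$ for every normalized $\psi\in H^1_\textnormal{S}(\RR^{3N})$, so $E(v)\geq 0$, and combining with the upper bound gives $E(v)=0$.

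The only step demanding any attention is the scaling identity $\|v(L\cdot)\|_2 = L^{-3/2}\|v\|_2$: the Jacobian factor $L^{-3/2}$ is exactly what beats the $L^{+3/2}$ growth one might fear when dilating against a potential, and it is what makes the density space $X_\textnormal{H}=L^2(\RR^3)$ compatible with the nonpositivity statement. This is also precisely the mechanism that fails on $X_\textnormal{L}' = L^{3/2}+L^\infty$, where one may add a nonzero constant to $v$ and shift $E$ by $Nc$, underlining that Proposition~\ref{thm:nonpos} is genuinely a topological feature of the $L^2$-based formulation.
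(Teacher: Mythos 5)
Your proof is correct, but it takes a genuinely different route from the paper's. The paper first reduces to $v\geq 0$ via the splitting $v=v_+-v_-$ (dropping $-v_-$ can only raise the infimum), and then combines \emph{two} separate limiting operations: a translation of a compactly supported trial state out to cubes $\Omega_{k,\lambda}$ where $\int_{\Omega_{k,\lambda}}v^2\to 0$, which kills the potential term, followed by a dilation $\lambda\to\infty$, which kills $\braket{T}$ and $\braket{W}$. You instead use a single one-parameter dilation family and observe that the Jacobian scaling $\|v(L\cdot)\|_2=L^{-3/2}\|v\|_2$ (equivalently $\|\rho_{\psi_L}\|_2=L^{-3/2}\|\rho_{\psi_0}\|_2$) already sends the potential term to zero by Cauchy--Schwarz, simultaneously with $L^{-2}\braket{T}$ and $L^{-1}\braket{W}$. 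This is tidier: it needs no positive/negative decomposition, since the Cauchy--Schwarz bound controls $|(v\,\vert\,\rho_{\psi_L})|$ regardless of the sign of $v$, and it collapses two limits into one. What the paper's version buys in exchange is a sharper isolation of the physical mechanism (electrons escaping to regions where $v$ is small in the mean, using only $\int_{\Omega_k}v^2\to 0$ along a sequence of congruent cells rather than the global $L^2$ scaling identity), which matches the heuristic remark following the proposition. The lower bound for $v\geq 0$ is identical in both arguments. No gaps.
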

\begin{proof}
  See Appendix A.
\end{proof}
The proof is based on the following idea. Write $v = v_+ - v_-$ where
$v_{\pm} \geq 0$ almost everywhere. The negative part $v_-$ can only
lower the energy, whereas $v_+\in X_\text{H}$ implies that $v$ decays
at infinity (in an average sense), thereby allowing the electrons to
lower their energy to zero by ``escaping to infinity''. Note how the
fact that nonzero $c\notin X_\text{H}$ affects this argument.

Consider now $\rho \equiv 0$, for which $F_\text{DM}(\rho) = +\infty$
since no $\Gamma \mapsto 0$. Evaluating the conjugate (Lieb) functional 
with respect to $X_\text{H}$ at $\rho=0$, we obtain
\begin{equation}
  F(0) = \sup_{v\in X_\text{H}} \left( E(v) - (v|0) \right) = \sup_{v\in X_\text{H}} E(v) = 0.
\end{equation}
Note that a different result is obtained using $X_\text{L}$, which
admits constant potentials $v(\mathbf r) \equiv c \in \mathbb R$:
\begin{equation}
  F(0) = \sup_{v\in X_\text{L}} \left( E(v) - (v|0) \right) = \sup_{v\in X_\text{L}} E(v) = +\infty = F_\text{DM}(0).
\end{equation}
As an immediate consequence, we arrive at the following result on $X_\text{H}$:
\begin{proposition}
\label{thm:subdiffsXX} 
  For any $v \in X_\textnormal{H}$ such that $v\geq 0$ almost everywhere, it holds that,
  \begin{equation}
    E(v) = F(0) + (v|0) \; \wedge \;  -v \in \partial^- F(0)
    \;\wedge\; 0 \in \partial^+ E(v), \label{eq:itholds}
  \end{equation}
where $F$ is conjugate to $E$.
  There exists no ground state $\psi \in
  H^1_\textnormal{S}(\RR^{3N})$ for any $v \geq 0$.
\end{proposition}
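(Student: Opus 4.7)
The plan has two essentially independent parts, corresponding to the two assertions in the proposition. For the triple equality-and-inclusion statement of Eq.~(\ref{eq:itholds}), I would first establish the energy identity $E(v) = F(0) + (v\,\vert\,0)$ and then invoke the general equivalence of Proposition~\ref{thm:subdiffs-conj}. Concretely, Proposition~\ref{thm:nonpos} already tells us $E(v) = 0$ for $v \in X_\textnormal{H}$ with $v \geq 0$ a.e., while the computation immediately preceding the statement yields $F(0) = 0$ for the Lieb functional conjugate to $E$ with respect to $X_\textnormal{H}$. Since $(v\,\vert\,0) = 0$ trivially, all three numbers coincide and the desired identity holds.

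Given this identity, Proposition~\ref{thm:subdiffs-conj}, applied at $\rho = 0$, at once produces the conjunction $-v \in \partial^- F(0)$ and $0 \in \partial^+ E(v)$ by virtue of the three-way equivalence it asserts for the Lieb functional. No further work is needed for Eq.~(\ref{eq:itholds}).

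For the nonexistence of a pure ground state, I would argue by contradiction. Suppose there were $\psi \in H^1_\textnormal{S}(\RR^{3N})$ with $\Vert\psi\Vert_2 = 1$ achieving $E(v) = \braket{\psi|\hat T + \hat W + \hat V|\psi}$. Writing this as the sum of three expectation values $\braket{\psi|\hat T|\psi}$, $\braket{\psi|\hat W|\psi}$, and $(v\,\vert\,\rho_\psi)$, each term is non-negative: $\hat T$ is positive, $\hat W$ is the repulsive Coulomb interaction, and $v,\rho_\psi \geq 0$ a.e. Because their sum equals $E(v) = 0$, each must vanish. In particular the kinetic energy vanishes, forcing $\nabla \psi = 0$ a.e., and hence $\psi$ constant; but an $L^2(\RR^{3N})$-normalized constant function does not exist on a domain of infinite Lebesgue measure, giving the required contradiction. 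The only mildly subtle step is this last one, which rests on the standard Sobolev fact that a weak gradient equal to zero forces the function to be constant almost everywhere, together with the infinitude of the Lebesgue measure of $\RR^{3N}$; everything else is direct bookkeeping from results already proved in the excerpt.
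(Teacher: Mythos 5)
Your proposal is correct and follows essentially the same route as the paper: Eq.~\eqref{eq:itholds} is obtained from $E(v)=0$, $F(0)=0$ and Proposition~\ref{thm:subdiffs-conj}, and the nonexistence of a pure ground state is reduced to the strict positivity of the kinetic energy, with $\nabla\psi=0$ forcing $\psi$ to be a constant that cannot be $L^2$-normalized on $\RR^{3N}$. The only cosmetic difference is that the paper argues directly that $\calE_v(\psi)\geq\braket{\psi|\hat T|\psi}>0$ rather than by contradiction from the vanishing of each nonnegative term, but the content is identical.
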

\begin{proof} Eq.~\eqref{eq:itholds} follows from Proposition\,\ref{thm:subdiffs-conj}.
  It remains to show that there exists no ground state if $v \geq
  0$ almost everywhere. Since $E(v) = 0$, it is sufficient to show that, for each $\psi \in
  H^1_\textnormal{S}(\RR^{3N})$, it holds that $\calE_v(\psi) > 0$. We have $\calE_v(\psi) \geq
  \braket{\psi|\hat{T}|\psi}$. But if $\psi$ has zero kinetic energy, then $\nabla\psi = 0$, from which
  it follows that $\psi = 0$ almost everywhere, contradicting
  $\|\psi\|_2 = 1$. Thus $\calE_v(\psi) \geq
  \braket{\psi|\hat{T}|\psi} > 0$.
\end{proof}

Our discussion illustrates how the choice of Banach space $X$ of densities
affects the properties of the conjugate universal functional $F : X
\to \RRinf$. Clearly, the Lieb functional $F_\text{DM}$ for the space $X_\text{L}$
is very different from the Lieb functional for $X_\text{H}$,
since, in the latter case, for any non-negative potential $v$, the unphysical density $\rho \equiv 0$
is a minimizer of the Hohenberg--Kohn variation principle. But $v$
does not even have a ground state and certainly not a ground state
with density $\rho \equiv 0$.

Connecting with the discussion following Proposition\,\ref{thm:subdiffs4},
we can see how topology influences differentiability of $E$. The map
$E: X_\text{H}\to\RR$ is pointwise identical to $E:X_\text{L}'\to\RR$:
it is by definition the ground-state energy of the system in the
external potential $v$, defined via the Rayleigh--Ritz variation
principle. However, the topology on $X_\text{H}$ is different and
$F_\text{DM}$ is no longer equal to the Lieb functional. We then cannot
conclude from Proposition\,\ref{thm:subdiffs4} that $E: X_\text{H}\to\RR$ is differentiable at $v$ 
if $v\in X_\text{H}$ has a unique ground state.

\section{Application to CDFT for molecular systems in magnetic fields}
\label{sec:cdft-application}

Consider an $N$-electron system subject to an external magnetic
vector potential $\vec{A} : \RR^3\to\RR^3$, with associated magnetic
field $\vec{B}(\vec{r}) = \nabla \times \vec{A}(\vec{r})$ (in the
distributional sense). Considering $\vec{A}$ as a variable on the same
footing as the scalar potential $v$, we arrive at CDFT, where
the paramagnetic current density $\jp \in \vec{L}^1(\RR^3)$ appears as a variable together with
$\rho$ \cite{Vignale1987,Tellgren2012,Laestadius2014}.
The mathematical foundation of CDFT is not as well developed as that
of DFT. Indeed, part of the motivation for the present work
stems from a study of CDFT~\cite{Kvaal2015}. On the other hand, Laestadius
\cite{Laestadius2014} has taken important steps---proving, for example,
Proposition~\ref{thm:FVR} below.

Following Ref.~\cite{Laestadius2014}, we assume for simplicity that
the components of $\vec{A}$ are $L^\infty(\RR^3)$ functions. The
single-electron momentum operator $-\mathrm i\nabla$ is then replaced by $-\mathrm i\nabla
+ \vec{A}$, transforming the $N$-electron kinetic-energy operator
$\hat{T}$ into the corresponding $\hat{T}_\vec{A}$. Since
$\vec{A}\in\vec{L}^\infty(\RR^3)$, the kinetic-energy expectation
value is still well defined for each $\psi \in H^1_\textnormal{S}(\RR^{3N})$. In
terms of the density $\rho \in L^1(\RR^3)$ and the paramagnetic current
density $\jp \in \vec{L}^1(\RR^3)$, we obtain
\begin{equation}
  \braket{\psi|\hat{T}_\vec{A}|\psi} = \braket{\psi|\hat{T}|\psi} +
  \frac{1}{2} \int \! |\vec{A}(\vec{r})|^2 \rho(\vec{r}) \, \mathrm d \vec{r} +
  \int \! \vec{A}(\vec{r})\cdot\jp(\vec{r}) \, \mathrm d\vec{r}.
\end{equation}
The ground-state energy is thus given by
\begin{equation}
  \begin{split}
  E(v,\vec{A}) &= \inf_{\psi, \|\psi\|_2=1}
  \braket{\psi|\hat{T}_\vec{A} + \hat{W} + \hat{V}|\psi}
  = \inf_{(\rho,\jp)} \left( F_\text{VR}(\rho,\jp) + (v + \tfrac{1}{2} A^2 \,\vert\, \rho ) + ( \vec{A} \,\vert\,\jp ) \right),
\end{split}
\end{equation}
where we have introduced the \emph{Vignale--Rasolt (VR) functional} \cite{Vignale1987}
\begin{equation}
  F_\text{VR}(\rho,\jp) := 
  \inf \left\{ \braket{\psi|\hat{T} + \hat{W}|\psi} \;
    \big|\; \psi \in H^1_\text{S}(\RR^{3N}), \; \|\psi\|_2 = 1, \; \psi \mapsto (\rho,\jp)
  \right\}.\label{eq:vr-fun}
\end{equation}
Similarly, we may define a density-matrix constrained-search functional
\begin{equation}
  F_\text{DM}(\rho,\jp) = \inf \left\{ \Tr((\hat{T} + \hat W )\Gamma) \; \big| \;
    \Gamma \mapsto (\rho,\jp) \right\}, \label{eq:FDM2-def}
\end{equation}  
where $\rho_\Gamma = \sum_k p_k \rho_{\psi_k}$ and $\jp_\Gamma =
\sum_k p_k \jp_{\psi_k}$. Whereas $F_\text{DM}$ is convex by construction,
the presence of the nonlinear $\vec{A}$-dependent term makes $E(v,\vec A)$ 
nonconcave. Following Ref.\;\cite{Tellgren2012}, it is therefore natural instead to work with 
$\tilde E(u,A) =  E(u - \tfrac{1}{2}A^2,\mathbf A)$, defined in
Proposition~\ref{thm:tildeE} below.
We note that 
\begin{equation}
(\rho,\jp) \in X_\text{L}\times \vec{L}^1(\RR^3),
\end{equation}
a Banach space with norm $\|(\rho,\jp)\| = \|\rho\| +
\|\jp\|_{\vec{L}^1}$ and dual $X_\text{L}'\times
\vec{L}^\infty(\RR^3)$, the latter containing all
$(v,\vec{A})$. 

First, we prove finiteness of the ground-state energy:
\begin{proposition}
  For every $(v,\vec{A}) \in X_\textnormal{L}\times\vec{L}^\infty$, the ground-state energy is finite,
  $E(v,\vec{A}) > -\infty$.
\end{proposition}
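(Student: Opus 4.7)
The plan is to absorb the troublesome cross term $(\vec A|\jp)$ into the manifestly nonnegative covariant kinetic operator $\hat T_{\vec A}$, then reduce the problem to a standard Simon-type relative-boundedness estimate for the scalar potential alone. Specifically, I write
\begin{equation}
\calE_{v,\vec A}(\psi) = \braket{\psi|\hat T_{\vec A}|\psi} + \braket{\psi|\hat W|\psi} + (v|\rho_\psi),
\end{equation}
where $\hat T_{\vec A} = \tfrac{1}{2}\sum_i(-i\nabla_i+\vec A(\vec r_i))^2\geq 0$ and $\hat W \geq 0$. The paramagnetic-current cross term $(\vec A|\jp)$ and the diamagnetic piece $\tfrac{1}{2}(A^2|\rho_\psi)$ are kept silently inside $\hat T_{\vec A}$, which avoids any need to control $\|\jp\|_{\vec L^1}$ directly; only $(v|\rho_\psi)$ can drive the total energy negative.

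Since $v \in X_\textnormal{L}' = L^{3/2}(\RR^3) + L^\infty(\RR^3)$, Theorem~\ref{thm:Simon} supplies, for any $\epsilon > 0$, a constant $C_\epsilon \geq 0$ such that $|(v|\rho_\psi)| \leq \epsilon \braket{\psi|\hat T + \hat W|\psi} + C_\epsilon$ for every normalized $\psi \in H^1_\textnormal{S}(\RR^{3N})$. To deploy this in the presence of $\vec A$, I trade the field-free kinetic energy for the gauge-covariant one via the pointwise parallelogram bound $|\nabla_i \psi|^2 \leq 2|(-i\nabla_i + \vec A(\vec r_i))\psi|^2 + 2|\vec A(\vec r_i)|^2|\psi|^2$. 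Together with $\vec A \in \vec L^\infty(\RR^3)$ and $\int\rho_\psi = N$, summation and integration yield
\begin{equation}
\braket{\psi|\hat T + \hat W|\psi} \leq 2\braket{\psi|\hat T_{\vec A}|\psi} + (A^2|\rho_\psi) + \braket{\psi|\hat W|\psi} \leq 2\braket{\psi|\hat T_{\vec A} + \hat W|\psi} + N\|\vec A\|_\infty^2,
\end{equation}
where the last step uses $\braket{\psi|\hat W|\psi}\geq 0$ and $\|A^2\|_\infty = \|\vec A\|_\infty^2$.

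Combining the two estimates,
\begin{equation}
\calE_{v,\vec A}(\psi) \geq (1-2\epsilon)\braket{\psi|\hat T_{\vec A} + \hat W|\psi} - \epsilon N\|\vec A\|_\infty^2 - C_\epsilon.
\end{equation}
Fixing any $\epsilon \in (0,1/2)$ and using $\braket{\psi|\hat T_{\vec A} + \hat W|\psi}\geq 0$ gives the $\psi$-uniform lower bound $\calE_{v,\vec A}(\psi) \geq -\epsilon N\|\vec A\|_\infty^2 - C_\epsilon$, whence $E(v,\vec A) > -\infty$ after taking the infimum. The main obstacle is conceptual rather than technical: a naive decomposition of $\calE_{v,\vec A}(\psi)$ exposes the sign-indefinite term $(\vec A|\jp)$, whose $\vec L^1$-norm in $\jp$ has no a priori bound in terms of $\calE_0(\psi)$; the clean fix is to keep $\hat T_{\vec A}$ assembled throughout so that this dangerous term never appears explicitly.
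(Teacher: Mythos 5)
Your proof is correct, and it shares the paper's central organizing idea: keep $\hat T_{\vec A}$ assembled so that the sign-indefinite term $(\vec A\,\vert\,\jp)$ never has to be controlled on its own, and reduce everything to relative boundedness of $\hat V$ with respect to a nonnegative operator. Where you diverge is in the lemma used to transfer Simon's bound from $\hat T$ to $\hat T_{\vec A}$. The paper invokes the diamagnetic inequality $\braket{\phi\vert\hat T\vert\phi}\leq\braket{\psi\vert\hat T_{\vec A}\vert\psi}$ with $\phi=\vert\psi\vert$ and $\rho_\phi=\rho_\psi$, which preserves the relative bound $\epsilon$ exactly and adds no constant; you instead use the elementary pointwise estimate $\vert\nabla_i\psi\vert^2\leq 2\vert(-\mathrm i\nabla_i+\vec A)\psi\vert^2+2\vert\vec A\vert^2\vert\psi\vert^2$, which is cruder --- it doubles the relative bound (hence your restriction $\epsilon<1/2$, harmless since Simon's theorem gives arbitrarily small $\epsilon$) and introduces the extra constant $N\Vert\vec A\Vert_\infty^2$, which is finite precisely because $\vec A\in\vec L^\infty$ and $\int\rho_\psi=N$. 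Your route buys elementarity: it avoids the nontrivial facts that $\vert\psi\vert\in H^1$ and that the diamagnetic inequality holds, at the cost of slightly worse constants. Both arguments are complete; one small note is that the proposition's hypothesis should read $v\in X_\textnormal{L}'$ (as you correctly assume), the statement's $X_\textnormal{L}$ being a typo also present in the paper's own proof.
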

\begin{proof}
  For an $N$-electron system influenced by a magnetic field, the \emph{diamagnetic inequality} \cite{LiebAndLoss} gives
  \begin{equation} 
    \braket{\phi|\hat{T}|\phi} \leq
    \braket{\psi|\hat{T}_\vec{A}|\psi}, \quad \forall \psi \in H^1_\textnormal{S}(\RR^{3N}),
  \end{equation}
  where $\phi = |\psi|$ be the pointwise absolute value. 
  Next, using the identity $\rho_\psi = \rho_\phi$ and the relative boundedness
  of $v\in X_\text{L}$ with respect to $\hat{T}$, we find that
  $\hat{V}$ is relatively bounded with respect to $\hat{T}_\vec{A}$:
  \begin{equation}
    |\braket{\psi|\hat{V}|\psi}| = |\braket{\phi|\hat{V}|\phi}| \leq
    \epsilon \braket{\phi|\hat{T}|\phi} + C_\epsilon \leq \epsilon
    \braket{\psi|\hat{T}_\vec{A}|\psi} + C_\epsilon.
  \end{equation}
  Similarly, $\hat{W}$ is relatively bounded with respect to
  $\hat{T}_\vec{A}$. It follows that $\braket{\psi|\hat{H}|\psi}$ is
  below bounded, and thus that $E(v,\vec{A}) > -\infty$.
\end{proof}
Next, we note that, for each $\vec{A} \in \vec{L}^\infty(\RR^3)$, we have
$|\vec{A}|^2 \in L^\infty(\RR^3) \subset X_\text{L}'$. Therefore, for each pair 
$(v,\vec{A})\in X_\text{L}$, we have $(v \pm \tfrac{1}{2}|\vec{A}|^2,\vec{A})
\in X_\text{L} \times \vec{L}^\infty$, making the following
proposition easy to prove:
\begin{proposition}\label{thm:tildeE}
  If $F_0 : X_\text{L}\times\vec{L}^1$ is either 
  $F_\mathrm{VR}$ or $F_\mathrm{DM}$, then $\tilde{E}
  : X_\textnormal{L}' \times \vec{L}^\infty(\RR^3) \to \RR$ defined by
  \begin{equation}
    \tilde{E}(u,\vec{A}) = \inf_{(\rho,\jp)} \left( F_0(\rho,\jp) + (u\,\vert\,\rho)
    + (\vec{A}\,\vert\,\jp) \right)
  \end{equation}
  is concave, finite, and therefore continuous. The ground-state energy $E$ and $\tilde{E}$ are related by
  \begin{equation}
    E(v,\vec{A}) = \tilde{E}(v + \tfrac{1}{2}|\vec{A}|^2, \vec{A}), \quad
    \tilde{E}(u,\vec{A}) = E(u - \tfrac{1}{2}|\vec{A}|^2, \vec{A}).
  \end{equation}
\end{proposition}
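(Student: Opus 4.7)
My plan is to dispatch the four claims in the order: (i) the substitution identity between $E$ and $\tilde E$, (ii) concavity of $\tilde E$, (iii) finiteness of $\tilde E$, and (iv) continuity as an automatic consequence of (ii) and (iii).

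First, I would establish the substitution identity. Starting from the Rayleigh--Ritz expression $E(v,\vec A) = \inf_\psi \braket{\psi|\hat T_{\vec A} + \hat W + \hat V|\psi}$ and the decomposition of $\braket{\psi|\hat T_{\vec A}|\psi}$ recorded just above the statement, I would rewrite the expectation value as $\braket{\psi|\hat T+\hat W|\psi} + (v + \tfrac12 |\vec A|^2 \,|\, \rho_\psi) + (\vec A\,|\,\jp_\psi)$, and then split the infimum over $\psi$ as an outer infimum over $(\rho,\jp)$ followed by an inner infimum over $\psi\mapsto(\rho,\jp)$. By the definition of $F_\text{VR}$ in Eq.\,(\ref{eq:vr-fun}), the inner infimum produces $F_\text{VR}(\rho,\jp)$, giving $E(v,\vec A) = \tilde E(v+\tfrac12|\vec A|^2,\vec A)$ in the pure-state case. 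The identical argument with a trace over $\Gamma$ and the definition in Eq.\,(\ref{eq:FDM2-def}) handles $F_\text{DM}$; this uses the linearity of $\rho$ and $\jp$ under convex combinations of pure states. The second equation follows by replacing $v$ with $u-\tfrac12|\vec A|^2$, which is legitimate because $|\vec A|^2 \in L^\infty \subset X_\text{L}'$ whenever $\vec A \in \vec L^\infty$, so $u-\tfrac12|\vec A|^2 \in X_\text{L}'$.

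Next, concavity of $\tilde E$ is immediate from its definition: for each fixed $(\rho,\jp)$, the map $(u,\vec A) \mapsto F_0(\rho,\jp) + (u\,|\,\rho) + (\vec A\,|\,\jp)$ is affine and continuous on $X_\text{L}' \times \vec L^\infty$, and $\tilde E$ is the pointwise infimum of this family, hence concave (and upper semi-continuous). For finiteness, I would combine the identity from the first paragraph with the preceding proposition, which shows $E(v,\vec A) > -\infty$ for all $(v,\vec A) \in X_\text{L}' \times \vec L^\infty$; thus $\tilde E(u,\vec A) = E(u - \tfrac12|\vec A|^2,\vec A) > -\infty$ on all of $X_\text{L}'\times\vec L^\infty$. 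An upper bound $\tilde E(u,\vec A) < +\infty$ follows by inserting any fixed $\psi \in C_c^\infty$ (with $\|\psi\|_2=1$) into the Rayleigh--Ritz expression for $E$, whose kinetic, interaction, scalar-potential, and linear-current terms are all finite for $(u,\vec A) \in X_\text{L}' \times \vec L^\infty$.

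Finally, continuity is obtained with no extra work. The Banach-space result invoked in Sec.\,\ref{sec:diffability} (a concave map on a Banach space is continuous on the interior of its effective domain) applies to $\tilde E$ on the Banach space $X_\text{L}'\times \vec L^\infty$; since $\dom(\tilde E)$ is the entire space by finiteness, $\tilde E$ is continuous everywhere. The only step that could cause trouble is the interchange of infima in the first paragraph---it is a routine fact for constrained-search functionals but worth verifying carefully for $F_\text{DM}$, where one must confirm that for each mixed state $\Gamma$ the quantities $\rho_\Gamma$ and $\jp_\Gamma$ combine linearly so that the outer-inner infimum decomposition is valid. Everything else is a direct application of results already established in the paper.
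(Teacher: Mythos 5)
Your proposal is correct and supplies exactly the details the paper omits (its ``proof'' is literally ``we leave the details to the reader''): the constrained-search decomposition of the Rayleigh--Ritz infimum, the observation that $|\vec A|^2\in L^\infty\subset X_\text{L}'$ so the substitution $u = v+\tfrac12|\vec A|^2$ stays in $X_\text{L}'$, the lower bound from the preceding finiteness proposition together with an explicit test state for the upper bound, and continuity from concavity plus everywhere-finiteness. The only point worth making fully explicit in the $F_\text{DM}$ case is that $\inf_\Gamma \Tr(\hat H\Gamma)$ coincides with the pure-state infimum $E(v,\vec A)$ (since each trace is a convex combination of pure-state expectation values), which is the step your remark about linearity of $\rho_\Gamma$ and $\jp_\Gamma$ is implicitly serving.
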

\begin{proof}
  We leave the details to the reader.
\end{proof}

The map $(v,\vec{A}) \mapsto (v + \tfrac{1}{2}|\vec{A}|^2,\vec{A})$
defined on $X_\text{L}'\times\vec{L}^\infty(\RR^3)$ is smooth and
invertible, with smooth inverse $(u,\vec{A}) \mapsto (u -
\tfrac{1}{2}|\vec{A}|^2,\vec{A})$. Thus, the properties of $E$ are
reflected in properties $\tilde{E}$ and vice versa. If $F_0$ is an
admissible functional for $\tilde{E} : X_\text{L}'\times\vec{L}^\infty\to\RR$,
then
\begin{equation}
  \begin{split}
    E(v,\vec{A}) = F_0(\rho,\jp) + (v + \tfrac{1}{2}|\vec{A}|^2 | \rho
    ) + (\vec{A}|\jp) \quad &\iff \quad -(v + \tfrac{1}{2}|\vec{A}|^2,
    \vec{A}) \in \partial^- F_0(\rho,\jp) \\ &\implies\quad
    (\rho,\jp)
    \in \partial^+\tilde{E}(v+\tfrac{1}{2}|\vec{A}|^2,\vec{A})
  \end{split}
\end{equation}

Laestadius proved the following result \cite{Laestadius2014}:
\begin{proposition}\label{thm:FVR}
  $F_\mathrm{VR} : L^1(\RR^3)\times\vec{L}^1(\RR^3)\to\RRinf$ is expectation
  valued.
\end{proposition}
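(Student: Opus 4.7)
The plan is to adapt Lieb's classical argument establishing that $F_\text{LL}$ is expectation valued, augmenting it with the required control of the paramagnetic current density. Fix $(\rho,\jp)$ with $F_\text{VR}(\rho,\jp) < +\infty$ and choose a minimizing sequence $\{\psi_n\} \subset H^1_\text{S}(\RR^{3N})$ with $\|\psi_n\|_2=1$, $\psi_n\mapsto(\rho,\jp)$, and $\braket{\psi_n|\hat T + \hat W|\psi_n} \to F_\text{VR}(\rho,\jp)$. Since $\hat W \geq 0$ and $\braket{\psi_n|\hat T|\psi_n} = \tfrac{1}{2}\|\nabla\psi_n\|_2^2$, the sequence is uniformly bounded in $H^1_\text{S}(\RR^{3N})$. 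By the Banach--Alaoglu theorem applied in the reflexive Hilbert space $H^1_\text{S}(\RR^{3N})$, a subsequence (still denoted $\{\psi_n\}$) converges weakly to some $\psi^\star \in H^1_\text{S}(\RR^{3N})$.

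Next I would verify that $\psi^\star$ is an admissible candidate, that is, $\|\psi^\star\|_2=1$ and $\psi^\star\mapsto (\rho,\jp)$. The Rellich--Kondrachov theorem gives strong $L^2_\text{loc}$ convergence of $\psi_n$ to $\psi^\star$, hence $|\psi_n|^2 \to |\psi^\star|^2$ in $L^1_\text{loc}(\RR^{3N})$. Integrating out $N-1$ coordinates, the density $\rho_{\psi_n}$ converges to $\rho_{\psi^\star}$ in $L^1_\text{loc}(\RR^3)$; since every $\rho_{\psi_n}$ equals $\rho$, we conclude $\rho_{\psi^\star}=\rho$ almost everywhere, and in particular $\int\rho_{\psi^\star} = N$, giving $\|\psi^\star\|_2=1$. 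For the paramagnetic current density, which has the schematic form $\jp \sim \operatorname{Im}(\psi^{\star}\,\nabla\psi)$, the product of the $L^2_\text{loc}$-strongly convergent factor $\psi_n$ with the weakly $L^2$-convergent factor $\nabla\psi_n$ converges in the sense of distributions to $\operatorname{Im}(\psi^{\star}\,\nabla\psi^\star)$. Since the prescribed $\jp$ is the distributional limit of the constant sequence $\jp_{\psi_n}=\jp$, we obtain $\jp_{\psi^\star}=\jp$.

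Finally I would establish that $\psi^\star$ realizes the infimum. Weak convergence in $H^1_\text{S}$ and the identity $\braket{\psi|\hat T|\psi} = \tfrac{1}{2}\|\nabla\psi\|_2^2$ yield
\begin{equation}
\braket{\psi^\star|\hat T|\psi^\star} \leq \liminf_{n\to\infty} \braket{\psi_n|\hat T|\psi_n}
\end{equation}
by the weak lower semicontinuity of the norm. For the repulsion $\hat W$, Fatou's lemma applied to $|\psi_n|^2$ (which converges pointwise almost everywhere along a further subsequence) combined with $W \geq 0$ gives
\begin{equation}
\braket{\psi^\star|\hat W|\psi^\star} \leq \liminf_{n\to\infty}\braket{\psi_n|\hat W|\psi_n}.
\end{equation}
Adding these and using that $\psi^\star$ is admissible for the constrained search, we get $F_\text{VR}(\rho,\jp) \leq \braket{\psi^\star|\hat T + \hat W|\psi^\star} \leq \liminf_n \braket{\psi_n|\hat T + \hat W|\psi_n} = F_\text{VR}(\rho,\jp)$, so the infimum is attained at $\psi^\star$.

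The main obstacle is step two, namely the preservation of the paramagnetic current density constraint under the weak $H^1$ limit. The density is straightforward because $\psi\mapsto\rho_\psi$ involves only $|\psi|^2$ and benefits from local compactness, but $\jp$ is a bilinear pairing between $\psi$ and $\nabla\psi$, where only one factor converges strongly. Controlling this bilinear term rigorously (so that $\jp_{\psi^\star}=\jp$ identically, not merely distributionally) is the nontrivial technical core; the rest of the argument is the standard weak lower semicontinuity machinery.
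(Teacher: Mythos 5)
The paper offers no proof of this proposition; it is quoted from Laestadius \cite{Laestadius2014}, whose argument is exactly the direct method you outline (minimizing sequence, weak $H^1$ compactness, weak lower semicontinuity of $\hat T$ plus Fatou for $\hat W\geq 0$, preservation of the constraints). Your strategy is therefore the right one, but as written it has a genuine gap, and it is not confined to the current-density step you flag. The inference ``integrating out $N-1$ coordinates, $\rho_{\psi_n}\to\rho_{\psi^\star}$ in $L^1_\text{loc}(\RR^3)$'' does not follow from $|\psi_n|^2\to|\psi^\star|^2$ in $L^1_\text{loc}(\RR^{3N})$: the marginal over the remaining coordinates integrates over the noncompact set $K\times\RR^{3(N-1)}$, where local convergence gives no control. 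From pointwise a.e.\ convergence and Fatou you only obtain $\rho_{\psi^\star}\leq\rho$, which leaves open the possibility $\|\psi^\star\|_2<1$ (mass escaping to infinity in the integrated-out variables); then neither the normalization, nor $\rho_{\psi^\star}=\rho$, nor $\jp_{\psi^\star}=\jp$ is secured, and the final chain of inequalities collapses.

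The missing idea, which closes this gap and the one you acknowledge for $\jp$ simultaneously, is that the constraint itself supplies tightness. Since every $\psi_n$ has the same density $\rho\in L^1(\RR^3)$ and $|\psi_n|^2$ is permutation symmetric,
\begin{equation}
\int_{\max_i|\mathbf r_i|>R}|\psi_n|^2 \;\leq\; \sum_{i=1}^N\int_{|\mathbf r_i|>R}|\psi_n|^2 \;=\;\int_{|\mathbf r|>R}\rho(\mathbf r)\,\mathrm d\mathbf r\;\longrightarrow\;0
\end{equation}
uniformly in $n$ as $R\to\infty$. Combined with Rellich--Kondrachov on balls, this upgrades the subsequence to one converging \emph{strongly} in $L^2(\RR^{3N})$. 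Then $\|\psi^\star\|_2=1$ and $\rho_{\psi^\star}=\rho$ are immediate, and the bilinear term is handled cleanly: for any $\vec a\in\vec L^\infty(\RR^3)$, the function $\vec a(\mathbf r_1)\cdot\nabla_1\psi_n$ converges weakly in $L^2(\RR^{3N})$ while $\bar\psi_n$ converges strongly, so $(\vec a\,\vert\,\jp_{\psi_n})\to(\vec a\,\vert\,\jp_{\psi^\star})$ for \emph{all} bounded $\vec a$, not merely compactly supported test functions; since $\jp_{\psi_n}=\jp$ for every $n$ and $\vec L^\infty$ separates points of $\vec L^1$, this forces $\jp_{\psi^\star}=\jp$. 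With this repair your argument is complete and coincides with the cited proof.
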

Proposition~\ref{thm:subdiffs1} therefore gives
\begin{equation}
  \psi_\rho \in \argmin_{\psi'} \braket{\psi|\hat{H}|\psi} \quad \iff
  \quad E(v,\vec{A}) = F_\text{VR}(\rho,\jp) + (v+\tfrac{1}{2}|\vec{A}|^2|\rho) +
  (\vec{A}|\jp).
\end{equation}
Thus, using the Vignale--Rasolt functional and the space
$X_\text{L}\times\vec{L}^1$ as density space, there are no minimizers of the pure-state Hohenberg--Kohn variation
principle for CDFT that do not correspond to ground-state wave functions.

On the other hand, it is not known whether
the density-matrix functional $F_\text{DM}$ in~(\ref{eq:vr-fun}) is
expectation valued, but it seems likely. If $F_\text{DM}$ is
\emph{not} expectation valued, there may be $(\rho,\jp)$ such that
\begin{equation}
  E(v,\vec{A}) = F_\text{DM}(\rho,\jp) + (v +
  \tfrac{1}{2}|\vec{A}|^2|\rho) + (\vec{A}|\jp) \label{eq:cdft-min}
\end{equation}
but such that there is no $\Gamma\mapsto(\rho,\jp)$, i.e.,
$(\rho,\jp)$ must be considered an unphysical minimizer of the
density-matrix Hohenberg--Kohn variation principle for CDFT.

Neither is it known whether $F_\text{DM}$ is
lower semi-continuous in the $L^1\times\vec{L}^1$ topology; the proof
for the ordinary DFT case in Ref.~\cite{Lieb1983} is not easy to
generalize to the present situation. Thus, there could be
$(\rho,\jp)\in \partial^+\tilde{E}(v+\tfrac{1}{2}|\vec{A}|^2,\vec{A})$
such that $-(v + \tfrac{1}{2}|\vec{A}|^2,\vec{A})\notin \partial^-
F_\text{DM}(\rho,\jp)$, so that $(\rho,\jp)$ does not satisfy Eq.~\eqref{eq:cdft-min}.

Finally, we may consider the Lieb functional $F :
X_\text{L}\times\vec{L}^1 \to \RRinf$, defined as
\begin{equation}
    F(\rho,\jp) := \sup_{u,\vec{A}} \left( \tilde{E}(u,\vec{A}) - (u|\rho) - (\vec{A}|\jp)\right) 
= \sup_{v,\vec{A}} \left(E(v,\vec{A}) - (v+\tfrac{1}{2}|\vec{A}|^2|\rho) - (\vec{A}|\jp) \right).
\end{equation}
This functional is convex and lower semi-continuous by construction. However, unlike in standard DFT,
it is {unknown} whether $F(\rho,\jp) = F_\text{DM}(\rho,\jp)$.

From the perspective of applying Proposition~\ref{thm:subdiffs3},
thereby establishing a result like Corollary~\ref{thm:characterization} for
CDFT, one must establish \emph{both} that $F_\text{DM}$ is expectation
valued \emph{and} lower semi-continuous. Thus, we do not know at the present
time, whether $E(v,\vec{A})$ is G\^ateaux-differentiable when the
ground-state density $(\rho,\jp)$ is unique.

\section{Conclusion}
\label{sec:conclusion}

We have analyzed the relationship between ground-state
densities obtained from the Rayleigh--Ritz variation principle (by minimizing over pure-state wave functions or mixed-state density matrices)
and from the Hohenberg--Kohn variation principles (by minimizing over densities using an admissible density functional $F_0(\rho)$).
For standard DFT for molecular systems, we established, for each potential $v
\in L^{3/2}(\RR^3) + L^{\infty}(\RR^3)$, a one-to-one correspondence
between the mixed ground-state densities of the Rayleigh--Ritz
variation principle and the mixed ground-state densities of the
Hohenberg--Kohn variation principle with the Lieb density-matrix
constrained-search functional $F_\text{DM} : L^1(\RR^3) \cap
L^3(\RR^3)\to\RRinf$. A similar
one-to-one correspondence is established between the pure ground-state
densities of the Rayleigh--Ritz variation principle and the pure
ground-state densities of the Hohenberg--Kohn variation
principle with the Levy--Lieb functional $F_\text{LL} : L^1(\RR^3)\cap
L^3(\RR^3) \to \RR$. In other words, all physical ground-state densities (pure
or mixed) are recovered with these functionals and there are no false 
densities (i.e., minimizing densities not associated with a ground-state wave function).

We also noted how the topology of the underlying Banach space $X$
impinges on the results---in particular, we noted that the Lieb functional
$F: X \to \RRinf$ depends explicitly on $X$. As an illustration, 
$F \neq F_\text{DM}$ on $X = L^2(\RR^3)$ but $F = F_\text{DM}$ on $L^1(\RR^3)\cap L^3(\RR^3)$.
Finally, CDFT was discussed and 
some open problems were pointed out---for example, it is unknown whether
$F_\text{DM}(\rho,\jp)$ is lower semi-continuous in any useful topology
such as $L^1(\RR^3)\times\vec{L}^1(\RR^3)$.

\begin{acknowledgments}

This work was supported by the Norwegian Research Council through the
CoE Centre for Theoretical and Computational Chemistry (CTCC) Grant
No.\ 179568/V30 and the Grant No.\ 171185/V30 and through the European
Research Council under the European Union Seventh Framework Program
through the Advanced Grant ABACUS, ERC Grant Agreement No.\ 267683.

\end{acknowledgments}

\appendix
\section{Proof of Proposition~\ref{thm:nonpos}}

\begin{proof}[Proof of Proposition~\ref{thm:nonpos}]
Writing $v = v_+ - v_-$, we obtain
  \begin{equation}
    E(v) = \inf_\rho \left(F_\text{DM}(\rho) + (v_+ \,\vert,\rho) - (v_- \,\vert,\rho) \right) \leq
    \inf_\rho \left( F_\text{DM}(\rho) + (v_+\,\vert\,\rho) = E(v_+) \right).
  \end{equation}
  It is therefore sufficient to show that $E(v_+) \leq 0$. In fact, we
  show that $E(v_+) = 0$.

  Let $v \geq 0$ almost everywhere.
  Let $\lambda>0$ be arbitrary and let $\Omega_{k,\lambda}$ with $k\in \NN$ be disjoint 
  cubes of side length $\lambda$ such that $\cup_k \Omega_{k,\lambda} =
  \RR^3$. Each $\Omega_{k,\lambda}$ can be obtained by translation of $\Omega_{1,\lambda}$.
  Since $v\in L^2(\RR^3)$, 
  \[ \int_{\RR^3} \!\! v(\mathbf r)^2 \,\mathrm d \mathbf r =\sum_k
  \int_{\Omega_{k,\lambda}} \!\!\!\!v (\mathbf r)^2 \,\mathrm d \mathbf r< +\infty, \]
  implying that $\int_{\Omega_{k,\lambda}} \!\! v(\mathbf r)^2\,\mathrm d \mathbf r \to 0$ as $k\to\infty$.
  Let $\psi \in \calC^\infty_c(\RR^{3N})$ be arbitrary but with
  support in $\Omega_{1,1}^{N}$ so that
  $\rho_\psi \in \calC^\infty_c(\RR^3)$ has support contained in $\Omega_{1,1}$. By translating
  $\psi$ properly (denoting the result by $\psi_k$), the support of $\psi_k$ is inside
  $\Omega_{k,1}^N$ and
  \begin{equation}
    F_\text{DM}(\rho_{\psi_k}) = F_\text{DM}(\rho_\psi) \leq \left\langle \psi \vert T+W \vert \psi \right\rangle
    \equiv \braket{T} + \braket{W},
  \end{equation}
  independent of $k$.
We obtain
  \begin{equation}
    E(v) \leq \lim_k \left(\braket{T} + \braket{W} + \int_{\Omega_{k,1}} \!\!\!\!v(\mathbf r) \,\rho_k(\mathbf r) \,\mathrm d \mathbf r \right) = \braket{T} + \braket{W},
  \end{equation}
  where we have used the fact that
  \begin{equation}
    \int_{\Omega_{k,1}} \!\!\!\!v(\mathbf r)\,\rho_k(\mathbf r)\,\mathrm d\mathbf r 
\leq \left(\int_{\Omega_{k,1}}\!\!\!\!v(\mathbf r)^2\, \mathrm d\mathbf r \right)^{1/2} \, \|\rho\|_2 \to 0
  \end{equation}
  as $k\to\infty$.
  
  We now increase the size of the boxes $\Omega_{k,\lambda}$ by varying $\lambda>0$.
  By dilating $\psi$ in the manner 
  \begin{equation}
    \psi(\vec{r}_1,\cdots) \mapsto \lambda^{3N/2}
    \psi(\lambda\vec{r}_1,\cdots),
  \end{equation}
  the support is still inside $\Omega_{1,\lambda}^\infty$ and the density
  is scaled as $\rho_\psi(\vec{r})\to
  \lambda^{-3}\rho_\psi(\lambda^{-1}\vec{r})$, conserving the number of particles.
We obtain the scaling
  \begin{equation}
    \braket{T} + \braket{W} \to \lambda^{-2}\braket{T} +
    \lambda^{-1}\braket{W}.
  \end{equation}
  By repeating the above argument for $\lambda=1$ and letting
  $\lambda\to\infty$, we obtain $E(v)\leq 0$. On the other hand, $E(v)
  \geq 0$ since the Hamiltonian $H(v)$ is positive with $v \geq 0$, yielding $E(v) = 0$.
\end{proof}

\bibliography{rr-hk-manuscript.bib}

\end{document}